\documentclass[a4paper,final]{llncs}
\pdfoutput=1

\usepackage[utf8]{inputenc}
\usepackage[T1]{fontenc}
\usepackage{amsmath}
\usepackage{amssymb}
\usepackage{xspace}
\usepackage{ifdraft}%
\usepackage{proofs}
\usepackage{proof}
\usepackage{color}
\usepackage{hyperref}
\usepackage{tikz}
\usepackage{wrapfig}
\hypersetup{
  pdfauthor = {Maciej Zielenkiewicz and Aleksy Schubert},
  pdftitle = {Automata Theory Approach to Predicate Intuitionistic~Logic},
  pdfsubject = {proof synthesis, program synthesis},
  pdfkeywords = {},
  pdfcreator = {LaTeX with hyperref package},
  pdfproducer = {pdflatex},
  bookmarks=true,
  final
}

\def\letBeIn#1#2#3{\textrm{\underline{let} }#1\textrm{ \underline{be} }#2\textrm{ \underline{in} }#3\xspace}
\def\packTo#1#2#3#4{\textrm{\underline{pack} }#1,\,#2\textrm{ \underline{to} }\exists#3.\,#4}

\def\stateno#1{\mathfrak #1}
\def\stateid#1#2{q^{#1}_{\stateno{#2}}}

\DeclareMathOperator{\inlOp}{\underline{in}_1}
\DeclareMathOperator{\inrOp}{\underline{in}_2}
\def\inl#1#2{\inlOp_{#1}#2}
\def\inr#1#2{\inrOp_{#1}#2}

\newcommand{\instruct}[2]{\mathsf{#1\ }#2}
\newcommand{\new}[1]{\instruct{new}{#1}}
\newcommand{\jmp}[1]{\instruct{jmp}{#1}}
\def\store#1{\instruct{store}{#1}}
\def\check#1{\instruct{check}{#1}}
\def\instr#1{\instruct{instR}{#1}}
\def\instl#1{\instruct{instL}{#1}}
\def\load#1{\instruct{load}{#1}}
\def\caseOf#1#2#3#4#5{\textrm{\underline{case} }#1 \textrm{
    \underline{of} }\left[#2\right]#3,\,\left[#4\right]#5}

\def\iline#1#2#3{$\stateid{#1}{#2}\colon$& $#3$ &}
\def\ixline#1#2#3#4{$q^{#1}_{\stateno #2,#3}\colon$& $#4$ &}
\def\ailine#1#2{$q^{#1}_a\colon$& $#2$& }
\def\aixline#1#2#3{$q^{#1}_{a,#2}\colon$& $#3$&}
\newcommand{\arity}[1]{\mathrm{arity}(#1)\xspace}

\newcommand{\FVf}[1]{{\mathrm{FV}}_1(#1) }
\newcommand{\fvar}{{\cal X}_\mathrm{1}\xspace}
\newcommand{\pvar}{{\cal X}_\mathrm{p}\xspace}
\newcommand{\pr}[1]{\textsc{#1}\xspace}
\newcommand{\<}{\langle\xspace}
\renewcommand{\>}{\rangle\xspace}
\newcommand{\bbot}{\mbox{{$\bot\hspace{-1.5ex}\bot$}}\xspace}
\newcommand{\aut}[1]{\mathbb{#1}\xspace}
\newcommand{\parfunc}{\rightharpoonup\xspace}
\newcommand{\fv}{\mathrm{fv}\xspace}
\newcommand{\succc}{\;\mathrm{succ}\;\xspace}
\newcommand{\axiom}{\mathrm{axiom}\xspace}
\newcommand{\bind}{\mathsf{bind}\xspace}

\title{Automata Theory Approach to Predicate Intuitionistic~Logic}
\author{Maciej Zielenkiewicz and Aleksy Schubert}
\institute{Institute of Informatics, University of Warsaw\\
   ul. S. Banacha 2, 02--097 Warsaw, Poland\\
\email{\tt [maciekz,alx]@mimuw.edu.pl}}

\begin{document}

\maketitle

\begin{abstract} 
  Predicate intuitionistic logic is a well established fragment of
  dependent types. According to the Curry-Howard isomorphism proof
  construction in the logic corresponds well to synthesis of a program
  the type of which is a given formula.  We present a model of
  automata that can handle proof construction in full intuitionistic
  first-order logic. The automata are constructed in such a way that
  any successful run corresponds directly to a normal proof in the
  logic. This makes it possible to discuss formal languages of proofs
  or programs, the closure properties of the automata and their
  connections with the traditional logical connectives.
\end{abstract}

\section{Introduction}

Investigations in automata theory lead to abstraction of algorithmic
processes of various kinds. This enables analysis of their strength
both in terms of their expressibility (i.e.\ answer questions on which
problems can be solved with their help) and in terms of resources they
consume (i.e.\ time or space). They also make it possible to shed a
different light on the original problem (e.g.\ the linguistic problem
of languages generated by grammars can be reduced to the analysis of
pushdown automata) which makes it possible to conduct analysis that
was not possible before. In addition, the automata become a particular
compact data structure that can in itself, when defined formally, be
subject to further computations, as finite or pushdown automata are in
automata theory.

Typically, design of automata requires extraction of finite control
over the process of interest. This is not always immediate in
$\lambda$-calculi as $\lambda$-terms can contain bound variables from
an infinite set. One possibility here consist in
restricting the programming language so that there is no need to
introduce binders. This method was used in the work of Düdder et al
\cite{DudderMJ14}, which was enough to synthesise $\lambda$-terms that
were programs in a simple but expressive functional language.

Another approach would be to restrict the program search to programs
in \emph{total discharge form}. In programs of this form, one needs to
keep track of types of available library calls, but not of the call names
themselves.  This idea was explored by Takahashi et al
\cite{TakahashiAH96} who defined context-free grammars that can be
used for proof search in propositional intuitionistic logic, which is,
by Curry-Howard isomorphism, equivalent to program search in the simply
typed $\lambda$-calculus. Actually, the grammars can be viewed as
performing program search by means of tree automata due to known
correspondence between grammars and tree automata. However, the
limitation to total discharge form can be avoided with help of the
technique developed by Schubert, Dekkers and Barendregt
\cite{SchubertDB15}.

A different approach to abstract machinery behind program search
process was proposed by Broda and Damas \cite{BrodaD05} who developed
a formula-tree proof method. This technique provides a realisation of
the proof search procedure for a particular propositional formula as a
data structure, which can be further subject to algorithmic
manipulation.

In addition to these investigations for intuitionistic propositional
logic there was a proposal to apply automata theoretic notions to
proof search in first-order logic \cite{Hetzl2012}. The paper
characterises a class of proofs in intuitionistic first-order logic
with so called \emph{tree automata with global equalities and
  disequalities} (TAGED) \cite{FiliotTT10}. The characterisation makes
it possible to recognise proofs that are not necessarily in normal
form, but is also limited to certain class of tautologies (as the
emptiness problem for the automata is decidable).

In this paper we propose an automata theoretical abstraction of the
proving process in full intuitionistic first-order logic. Its
advantages can be best expressed in terms in which implicit, but
crucial, features of proof search become explicit. In our automata the
following elements of the proving process are exposed.
\begin{itemize}
\item The finite control of the proving process is made explicit.
\item A binary internal structure of the control is explicated where
  one component corresponds to a subformula of the original formula
  and one to the internal operations that should be done to handle the
  proof part relevant for the subformula. As a by-product of this
  formulation it becomes aparent how crucial role the subformula
  property plays in the proving process.
\item The resource that serves to represent eigenvariables that occur
  in the process is distinguished. This abstraction is important as
  the variables play crucial role in complexity results concerning the
  logic \cite{SchubertUZ2015,RPQNisNE}.
\item The automata enable the possibility of getting rid of the particular
  syntactical form of formulas and instead work on more abstract
  structures.
\item The definition of automaton distils the basic instructions
  necessary to conduct the proof process, which brings into the view
  more elementary operations the proving process depends on. 
\end{itemize}
Although the work is formulated in terms of 
logic, it can be viewed as synthesis of programs in a
restricted class of dependently typed functional programs.

\paragraph{Organisation of the paper} We fix the notation and present
intuitionistic first-order logic in
Section~\ref{sec:preliminaries}. Next, we define our automata in
Section~\ref{sec:automata}.  We summarise the account in
Section~\ref{sec:conclusions}.

\section{Preliminaries}
\label{sec:preliminaries}

We need to fix the notation and present the basic facts about
intuitionistic first-order logic. The notation $A\parfunc B$ is used
to denote the type of partial functions from $A$ to $B$. We write
$\dom{w}$ for the domain of the function $w:A\parfunc B$.  For two
partial functions $w, w'$ we define
$w\oplus w' = w \cup \{\<x,y\>\in w' \mid x\not\in\dom{w}\}$.  The set
of all subsets of a set $A$ is $P(A)$.

A prefix closed set of strings $\Nat^*$ over $\Nat$ is called a
\emph{carrier of a tree}. A tree is a tuple $\<A,\leq,L,l\>$ where
$A$ is a carrier of the tree, $\leq$ is the prefix order on $\Nat^*$,
the set $L$ is the set of labels and $l:A\to L$ is the \emph{labelling
  function}. Whenever the set of labels and the labelling function
are clear from the context, we abbreviate the quadruple to the tuple
$\<A,\leq\>$. Since the formula notation makes it easy, we sometimes
use a subtree $\varphi$ of $A$ to actually denote a node in $A$ at
which $\varphi$ starts.

\subsection{Intuitionistic First-Order Logic}
\label{sec:first-order-logic}

The basis for our study is the first-order intuitionistic logic (for
more details see e.g.\ the work of Urzyczyn, \cite{Urzyczyn2016}). We
assume that we have a set of predicates ${\cal P}$ that can be used to
form atomic formulae and an infinite set $\fvar$ of first-order
variables, usually noted as $X, Y, Z$ etc.\ with possible
annotations. Each element $\pr{P}$ of ${\cal P}$ has an arity, denoted
$\arity{\pr{P}}$.  The formulae of the system are:
$$
\varphi,\psi::= \pr{P}(X_1,\ldots,X_n) \mid
                \varphi_1\land\varphi_2 \mid
                \varphi_1\lor\varphi_2 \mid
                \varphi_1\to \varphi_2 \mid 
                \forall X.\varphi \mid
                \exists X.\varphi \mid 
                \bot
$$
where $\pr{P}$ is an $n$-ary predicate and $X,X_1,\ldots,X_n\in\fvar$.
We follow Prawitz and introduce negation as a notation defined
$\lnot\varphi = \varphi\to\bot$.  A formula of the form
$\pr{P}(X_1,\ldots,X_n)$ is called an \emph{atom}.  A \emph{pseudo-atom
  formula} is a formula of one of the three forms: atom formula, a formula
of the form $\exists X.\varphi$, or a formula of the form
$\varphi_1\lor\varphi_2$.
We do not include parentheses in the grammar since we actually
understand the formulas as abstract syntax trees instead of strings.
The tree is traditionally labelled with the cases of the above
mentioned grammar. We assume that for a given case in the grammar the
corresponding node of the tree has as many sons as there are
non-terminal symbols in the case.  In addition, we use in writing
traditional disambiguation conventions for $\land, \lor$ and insert
parentheses to further disambiguate whenever this is
necessary. The connective $\to$ is understood as
right-associative so that $\varphi_1\to \varphi_2\to \varphi_3$ is
equivalent to $\varphi_1\to (\varphi_2\to \varphi_3)$. In a formula
$\varphi = \varphi_1\to\cdots\to \varphi_n\to \varphi'$, where
$\varphi'$ is a pseudo-atom, the formula $\varphi'$ is called
\emph{target of $\varphi$}. In case $\varphi' = \exists X.\varphi''$,
we call it \emph{existential target} of $\varphi$. 

The set of {\em free first-order
  variables} in a formula $\varphi$, written $\FVf{\varphi}$, is
\begin{itemize}
\item $\FVf{\pr{P}(X_1,\ldots, X_n)} = \{ X_1,\ldots, X_n\}$,
\item $\FVf{\varphi_1\ast \varphi_2} =
  \FVf{\varphi_1}\cup\FVf{\varphi_2}$ where $\ast\in\{\land,\lor,\to\},$
\item $\FVf{Q X.\varphi} = \FVf{\varphi}\backslash \{X\}$ where
  $Q\in\{\exists,\forall\}$,
\item $\FVf{\bot} = \emptyset$.
\end{itemize}
Other variables that occur in a formula are bound. Terms that differ
only in renaming of bound variables are $\alpha$-equivalent
and we do not distinguish between them.
To describe the binding structure of a formula we use a~special
$\bind$ operation.  Let us assume that a formula $\varphi$ has no free
variables (i.e.\ $\FVf{\varphi}=\emptyset$) and let $\psi$ be its
subformula together with a variable $X$ free in $\psi$. We define
$\bind_\varphi(\psi,X)$ as the subformula of $\varphi$ that binds the
free occurrences of $X$ in $\psi$, i.e.\ the subformula $\varphi'$ of
$\varphi$ such that each its proper subformula $\psi''$ that contains
$\psi$ as a subformula has $X\in\FV{\psi''}$. For instance
$\bind_{\bot\to\exists X.\bot\to P(X)}(P(X),X) = \exists X.\bot\to
P(X)$.

\begin{figure}[htb]
\begin{center}\small
$$
\infer[(var)]{\Gamma, x\!:\!\varphi\vdash x:\varphi}{}
$$
\vspace{-1ex}
$$
\infer[(\land I)]{\Gamma\vdash \<M_1, M_2\> :\varphi_1\land \varphi_2}
{\Gamma\vdash M_1:\varphi_1  &
\Gamma\vdash M_2:\varphi_2}
$$
$$
\infer[(\land E1)]{\Gamma\vdash \pi_1 M:\varphi_1}
{\Gamma\vdash M: \varphi_1\land \varphi_2}
\quad
\infer[(\land E2)]{\Gamma\vdash \pi_2M:\varphi_2}
{\Gamma\vdash M: \varphi_1\land \varphi_2}
$$
\vspace{-1ex}
$$
\infer[(\lor I1)]{\Gamma\vdash \inl{\varphi_1\lor\varphi_2}M :\varphi_1\lor \varphi_2}
{\Gamma\vdash M:\varphi_1}
\qquad
\infer[(\lor I1)]{\Gamma\vdash \inr{\varphi_1\lor\varphi_2}M :\varphi_1\lor \varphi_2}
{\Gamma\vdash M:\varphi_2}
$$
$$
\infer[(\lor E)]{\Gamma\vdash
  \caseOf{M}{x:\varphi_1}{N_1}{y:\varphi_2}{N_2}:\varphi}
{\Gamma\vdash M: \varphi_1\lor \varphi_2 & 
  \Gamma,x:\varphi_1\vdash N_1: \varphi &
  \Gamma,y:\varphi_2\vdash N_2: \varphi}
$$
\vspace{-1ex}
$$
\infer[(\to I)]{\Gamma\vdash\lambda x:\varphi_1.M:\varphi_1\to \varphi_2}
{\Gamma, x\!:\! \varphi_1\vdash M:\varphi_2}
\qquad
\infer[(\to E)]{\Gamma\vdash M_1M_2:\varphi_2}
{\Gamma\vdash M_1: \varphi_1\to \varphi_2 & \Gamma\vdash M_2: \varphi_1}
$$
\vspace{-1ex}
$$
\infer[(\forall I)^*]{\Gamma\vdash \lambda X M : \forall X.\varphi}
{\Gamma\vdash M:\varphi}
\qquad
\infer[(\forall E)^*]
{\Gamma\vdash MY:\varphi[X:=Y]}
{\Gamma\vdash M:\forall X.\varphi}
$$
\vspace{-1ex}
$$
\infer[(\exists I)]{\Gamma\!\vdash\! \packTo{\!M\!}{\!Y\!}{X}{\varphi} : \exists
  X.\varphi}%
{\Gamma\!\vdash M:\varphi[X:=Y]}
\quad
\infer[(\exists E)^*]
{\Gamma\!\vdash\! \letBeIn{x\!:\!\varphi}{M_1\!:\!\exists X.\varphi}{M_2}:\psi}
{\Gamma\!\vdash M_1:\exists X.\varphi & \Gamma, x\!:\!\varphi\vdash M_2:\psi}
$$
\infer[(\bot E)]
{\Gamma\vdash \bbot_\varphi M:\varphi}
{\Gamma\vdash M:\bot}
\end{center}

\rule{5em}{0.5pt}\\
${}^*$ Under the eigenvariable condition
$X\not\in FV(\Gamma, \psi)$.

  \caption{The rules of the intuitionistic first-order logic}
  \label{fig:rules}
\end{figure}

For the definition of \emph{proof terms} we assume that there is an
infinite set of \emph{proof term variables} $\pvar$, usually noted as
$x, y, z$ etc.\ with possible annotations. These can be used to form the
following terms.
$$
\begin{array}{l@{\;}l}
M,N ::= & x\mid 
          \<M_1, M_2\> \mid \pi_1 M\mid \pi_2 M \mid \\&
          \inl{\varphi_1\lor\varphi_2}M \mid
          \inr{\varphi_1\lor\varphi_2}M \mid 
          \caseOf{M}{x:\varphi_1}{N_1}{y:\varphi_2}{N_2} \mid \\ &
          \lambda x:\varphi.M \mid M_1M_2 \mid 
          \lambda X M \mid MX \mid \\ &
          \packTo{M}{Y}{X}{\varphi} \mid
          \letBeIn{x:\varphi}{M_1:\exists X.\varphi}{M_2} \mid \bbot_\varphi M
\end{array}
$$
where $x$ is a proof term variable, $\varphi, \varphi_1, \varphi_2$
are first-order formulas and $X, Y$ are first-order variables. Due to
Curry-Howard isomorphism the proof terms can serve as programs in a
functional programming language. Their operational semantics is given
in terms of reductions. Their full exposition can be found in the work
of de Groote \cite{deGroote02}. We omit it here, but give an intuitive
account of the meaning of the terms. In particular,
$\langle M_1, M_2\rangle$ represents the product aggregation construct
and $\pi_iM$ for $i=1,2$ decomposition of the aggregation by means of
projections. The terms $\inl{\varphi_1\lor\varphi_2}M$,
$\inr{\varphi_1\lor\varphi_2}M$ reinterpret the value of $M$ as one in
type $\varphi_1\lor\varphi_2$. At the same time
$\caseOf{M}{x:\varphi_1}{N_1}{y:\varphi_2}{N_2}$ construct offers the
possibility to make case analysis of a value in an $\lor$-type. This
construct is available in functional programming languages in a more
general form of algebraic types. The terms $\lambda x:\varphi.M$,
$M_1M_2$ represent traditional function abstraction and
application. The proof terms that represent universal quantifier
manipulation make it possible to parametrise type with a particular
value $\lambda X M$ and use the parametrised term for a particular
case $MX$. At last $\packTo{M}{Y}{X}{\varphi}$ makes it possible to
hide behind a variable $X$ an actual realisation of a construction
that uses another individual variable $Y$. The abstraction obtained in
this way can be used using
$\letBeIn{x:\varphi}{M_1:\exists X.\varphi}{M_2}$. At last the term
$\bbot_\varphi M$ corresponds to the break instruction.

The environments ($\Gamma, \Delta$
etc.\ with possible annotations) in the proving system are finite sets of pairs $x:\psi$
that assign formulas to proof variables. We write $\Gamma\vdash M:A$
to express that the judgement is indeed derivable.  The inference
rules of the logic are presented in~Fig.~\ref{fig:rules}. We have now
two kinds of free variables, namely free proof term variables and free
first-order variables. The set of free term variables is defined
inductively as follows
\begin{itemize}
\item $\FV{x} = \{x\}$,
\item $\FV{\<M_1, M_2\>} = \FV{M_1M_2} = \FV{M_1}\cup\FV{M_2}$,
\item $\FV{\pi_1 M} = \FV{\pi_2 M} = 
  \FV{\inl{\varphi_1\lor\varphi_2}M} =
  \FV{\inr{\varphi_1\lor\varphi_2}M} =$\\ 
  $\FV{\lambda X M} =  \FV{MX} = \FV{\packTo{M}{Y}{X}{\varphi}} = 
  \FV{\bbot_\varphi M} =$\\
  $\FV{M}$,
\item $\FV{\caseOf{M}{x:\varphi_1}{N_1}{y:\varphi_2}{N_2}} =$\\
 $ \FV{M}\cup(\FV{N_1}\backslash\{x\})\cup(\FV{N_2}\backslash y),$
\item $\FV{\lambda x:\varphi.M} = \FV{X}\backslash\{x\},$
\item
  $\FV{\letBeIn{x:\varphi}{M_1:\exists X.\varphi}{M_2}} =
  \FV{M_1}\cup(\FV{M_2}\backslash\{x\}).$
\end{itemize}
Again, the terms that differ only in names of bound term variables are
considered $\alpha$-equivalent and are not distinguished by us. Note
that we can use the notation $\FVf{M}$ to refer to all free type
variables that occur in $M$. This set is defined by recursion over the
terms and taking all the free first-order variables that occur in
formulas that are part of the terms so that for instance
$\FVf{\inl{\varphi_1\lor\varphi_2}M} =
\FVf{\varphi_1}\cup\FVf{\varphi_2}\cup\FVf{M}$. At the same time there
are naturally terms that bind first-order variables,
$\FVf{\lambda XM} = \FVf{M}\backslash\{X\}$ and bring new free
first-order ones, e.g.\ $\FVf{MX} = \FV{M}\cup\{X\}$.

Traditionally, the $(cut)$ rule is not mentioned among standard rules
in Fig.~\ref{fig:rules}, but as it is common in $\lambda$-calculi, it
is included it in the system in the form of a $\beta$-reduction rule.
This rule forms the basic computation mechanism in the stystem
understood as a programming language. We omit the rules due to the
lack of space, but an interested reader can find them in the work of
de~Groote \cite{deGroote02}. Still, we want to focus our attention to
terms in normal form (i.e.\ terms that cannot be further
reduced). Partly because the search for terms in such form is easier
and partly because source code of programs contains virtually
exclusively terms in normal form. The following theorem states that
this simplification does not make us lose any possible programs in our
program synthesis approach.
\begin{theorem}[Normalisation]
  \label{theorem:normalisation-first-order}
  First-order intuitionistic logic is strongly normalisable i.e.\ each
  reduction has a finite number of steps.
\end{theorem}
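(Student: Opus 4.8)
The plan is to establish strong normalisation by the Tait--Girard method of \emph{reducibility} (computability) predicates, following the treatment of this very calculus by de~Groote~\cite{deGroote02}. Because the logic is \emph{first order}, there is no impredicativity to worry about: the substitution $[X:=Y]$ that appears in the $(\forall E)$ and $(\exists I)$ rules only replaces one first-order variable by another, so it leaves the connective/quantifier structure of a formula unchanged while the eliminated formula loses its outermost quantifier. Hence I can define a reducibility set $\mathrm{Red}(\varphi)$ of proof terms for every formula $\varphi$ by structural induction on $\varphi$, and all later inductions can take the number of connectives and quantifiers of $\varphi$ as a well-founded measure.

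First I would define $\mathrm{Red}(\varphi)$ as follows. For an atom and for $\bot$, let $\mathrm{Red}(\varphi)$ be the set of strongly normalising terms. For an implication, $M\in\mathrm{Red}(\varphi_1\to\varphi_2)$ iff $MN\in\mathrm{Red}(\varphi_2)$ for every $N\in\mathrm{Red}(\varphi_1)$; for a conjunction, iff $\pi_1M\in\mathrm{Red}(\varphi_1)$ and $\pi_2M\in\mathrm{Red}(\varphi_2)$; for a universal, $M\in\mathrm{Red}(\forall X.\varphi)$ iff $MY\in\mathrm{Red}(\varphi[X:=Y])$ for every first-order variable $Y$. The positive connectives $\lor$ and $\exists$ I would handle through their eliminators together with a saturation/closure condition. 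Next I would verify the standard closure properties of these sets: (CR1) every $M\in\mathrm{Red}(\varphi)$ is strongly normalising; (CR2) $\mathrm{Red}(\varphi)$ is closed under reduction; (CR3) if $M$ is \emph{neutral} (an application, a projection, a $\mathrm{case}$, a $\mathrm{let}$, a $\bot$-elimination, or a variable, i.e.\ not an introduction form) and every one-step reduct of $M$ lies in $\mathrm{Red}(\varphi)$, then $M\in\mathrm{Red}(\varphi)$. In particular, by (CR3), every proof-term variable belongs to the reducibility set of its type.

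The core is the adequacy lemma: if $\Gamma\vdash M:\varphi$ and $\rho$ is a substitution sending each $x:\psi\in\Gamma$ to a term in $\mathrm{Red}(\psi)$ (and, independently, substituting first-order variables by first-order variables), then $M\rho\in\mathrm{Red}(\varphi)$. This I would prove by induction on the typing derivation, one case per rule of Fig.~\ref{fig:rules}. The introduction rules require the usual expansion lemmas showing that an introduction form is reducible once its immediate subterms are (for which (CR1)--(CR3) and an induction on the sum of the lengths of the longest reductions of the subterms are used); the $(\forall I)$ and $(\exists I)$ cases additionally use that reducibility is stable under renaming of first-order variables, and the $(\forall I)$/$(\exists E)$ eigenvariable side conditions guarantee that the substituted first-order variable does not clash with $\Gamma$ or the conclusion. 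Taking $\rho$ to be the identity (legitimate, since variables are reducible by (CR3)) yields $M\in\mathrm{Red}(\varphi)$, and (CR1) then gives that $M$ is strongly normalising.

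The main obstacle will be the positive connectives $\lor$ and $\exists$, because the calculus of Fig.~\ref{fig:rules} comes (through its $\beta$-rules in \cite{deGroote02}) with \emph{permutative} (commuting) conversions for $(\lor E)$ and $(\exists E)$ --- reductions that push a surrounding elimination inside a $\mathrm{case}$ or a $\mathrm{let}$. These mean that a $\mathrm{case}$ or $\mathrm{let}$ is neutral yet still interacts with its context, so the naive clause ``$M$ is reducible iff it reduces only to reducible injections'' does not by itself yield the computability of an arbitrary $\mathrm{case}$ expression. The standard remedy, which I would follow, is to prove a dedicated \emph{elimination lemma}: whenever $M\in\mathrm{Red}(\varphi_1\lor\varphi_2)$ and the two branches are reducible for all reducible inputs, then $\caseOf{M}{x:\varphi_1}{N_1}{y:\varphi_2}{N_2}\in\mathrm{Red}(\varphi)$, and symmetrically for $\mathrm{let}$ and $\exists$. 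Its proof is a nested induction on the longest reductions of $M,N_1,N_2$ that simultaneously discharges the $\iota$-reduction, the permutative reductions, and the internal reductions, invoking (CR3) to cover the neutral case. Once these two lemmas are in place, the corresponding cases of the adequacy lemma become routine.
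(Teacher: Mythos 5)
First, note that the paper does not prove this theorem at all: it is imported as a known result, with the reduction rules and their metatheory delegated to de~Groote~\cite{deGroote02}, whose own argument goes by a reduction-preserving CPS-style translation into a strongly normalising target calculus rather than by a direct reducibility construction. So your proposal cannot be ``the same as the paper's proof''; it is an independent proof plan and has to be judged on its own merits.

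As a plan it has the right shape, and you correctly locate the danger: the permutative conversions attached to $(\lor E)$ and $(\exists E)$. But the one sentence you offer at exactly that point does not work as stated. Your elimination lemma is to be proved ``by a nested induction on the longest reductions of $M,N_1,N_2$''. Consider the reduct produced by a permutative step, e.g.\ $(\mathrm{case}\ M\ \mathrm{of}\ [x]N_1,[y]N_2)\,P$ reducing to $\mathrm{case}\ M\ \mathrm{of}\ [x](N_1P),[y](N_2P)$, or a case-of-case permutation: the new branches have the form $E[N_1]$ and $E[N_2]$ for the surrounding eliminator $E$, and their longest reductions are not shorter than those of $N_1$ and $N_2$, so your measure does not decrease and the induction does not close. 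This is precisely the step at which several historical ``proofs'' of strong normalisation with permutation-conversions went wrong. The known repairs --- quantifying the elimination lemma over a whole stack of pending eliminations and inducting on that stack together with the result type, using a separate inductive characterisation of the strongly normalising terms, or following de~Groote's translation --- all require a genuinely different \emph{statement} of the lemma, not merely more care in its proof. Until the measure (or the statement) of your elimination lemma is fixed, the $\lor$ and $\exists$ cases, and hence the theorem, are not established.
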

The paper by de~Groote contains also (implicitly) the following
result.
\begin{theorem}[Subject reduction]
  \label{theorem:subject-reduction-first-order}
  First-order intuitionistic logic has the subject reduction property,
  i.e.\ if $\Gamma\vdash M:\phi$ and $M\to_{\beta\cup p} N$ then
  $\Gamma\vdash N:\phi$.
\end{theorem}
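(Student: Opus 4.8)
The plan is to reduce the statement to the standard metatheoretic lemmas and then argue by induction on the reduction. First I would establish the auxiliary results on which every subject-reduction proof rests: a \emph{weakening} lemma (enlarging $\Gamma$ preserves derivability, which here is just monotonicity, since environments are finite sets), a \emph{generation} (inversion) lemma that reads off, from a derivation of $\Gamma\vdash M:\phi$, the shape forced by the top constructor of $M$, and --- most importantly --- two \emph{substitution} lemmas. The proof-term substitution lemma states that $\Gamma, x\!:\!\psi\vdash M:\phi$ together with $\Gamma\vdash N:\psi$ yields $\Gamma\vdash M[x:=N]:\phi$; the first-order substitution lemma states that $\Gamma\vdash M:\phi$ yields $\Gamma[X:=Y]\vdash M[X:=Y]:\phi[X:=Y]$ whenever $Y$ is free for $X$. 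Both are proved by routine induction on the typing derivation, the only delicate rules being $(\forall I)$ and $(\exists E)$, where the eigenvariable condition $X\notin\FVf{\Gamma,\psi}$ must be re-established after renaming, which is always possible by $\alpha$-conversion of the bound first-order variable.

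The main argument then proceeds by induction on the derivation of $M\to_{\beta\cup p}N$, i.e.\ on the compatible closure of the contraction rules. For the congruence cases (reduction occurring strictly inside a subterm) I would invert the typing derivation of $M$ with the generation lemma, apply the induction hypothesis to retype the reduced subterm with the \emph{same} type, and reassemble the derivation using the identical final rule. The base cases are the individual contractions. For each $\beta$-redex one applies the appropriate substitution lemma to the premise derivations extracted by inversion: $\pi_i\pair{M_1}{M_2}\to M_i$ and $(\lambda x\!:\!\varphi.M)N\to M[x:=N]$ use the proof-term lemma; $(\lambda X\,M)Y\to M[X:=Y]$ uses the first-order lemma; the disjunction redex $\caseOf{\inl{\varphi_1\lor\varphi_2}M}{x:\varphi_1}{N_1}{y:\varphi_2}{N_2}\to N_1[x:=M]$ (and its $\inrOp$ twin) again uses the proof-term lemma; and the existential redex $\letBeIn{x:\varphi}{(\packTo{M}{Y}{X}{\varphi})}{N}\to N[X:=Y][x:=M]$ requires \emph{both} lemmas composed, first renaming the eigenvariable $X$ to the witness $Y$ and then substituting the proof term $M$.

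The permutative conversions of the $p$-fragment --- commuting an elimination past a case-split or a let-binding --- are handled uniformly: inversion exposes that the subject of the case (respectively let) carries a disjunction (respectively existential) type while each branch is typed at the common conclusion $\psi$, so pushing the surrounding eliminator into the branches re-derives the same conclusion, the auxiliary data of the eliminator being re-typable in each branch's extended context by weakening.

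The step I expect to be the main obstacle is the existential $\beta$-case together with the $(\exists E)$ permutation: there the two substitutions interact with the eigenvariable condition, and one must check that the eigenvariable $X$ neither occurs free in $\Gamma$ nor escapes into the witnessing proof term $M$, so that the two substitutions have disjoint effect and the conclusion $\psi$ --- which does not mention $X$ --- is preserved. All remaining cases are mechanical once the lemmas are in place, and since the cited development of de~Groote \cite{deGroote02} supplies the reduction rules explicitly, no further conversions arise.
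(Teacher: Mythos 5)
The paper offers no proof of this theorem at all: it is stated as a result contained ``(implicitly)'' in the cited development of de~Groote \cite{deGroote02}, so there is no argument of the authors' own to compare yours against. Your proposal is the standard and, as far as I can see, correct way to discharge the obligation: weakening, generation, the two substitution lemmas, and induction on the compatible closure of the contraction and permutation rules, with the only genuinely delicate points being exactly the ones you flag --- the interaction of the eigenvariable condition with the existential redex $\letBeIn{x:\varphi}{(\packTo{M}{Y}{X}{\varphi})}{N}\to N[X:=Y][x:=M]$ and with the commuting conversions for $(\lor E)$ and $(\exists E)$. Two small points worth making explicit if you write this out in full: in the $(\lambda X\,M)Y$ case the eigenvariable condition gives $X\notin\FVf{\Gamma}$, so the first-order substitution lemma specialises to the \emph{same} context $\Gamma$ rather than $\Gamma[X:=Y]$, which is what the statement of subject reduction requires; and in the permutation cases you must check that the result type of the pushed-in eliminator still satisfies the eigenvariable condition of the outer $(\exists E)$ or the freshness conditions of the outer $(\lor E)$, which holds because that type is a subexpression of the original conclusion. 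With those remarks the argument is complete and matches what one would expect to find in the cited source.
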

As a consequence we obtain that each provable formula has a proof in
normal form. However, we need in our proofs a stricter notion of
\emph{long normal form}.

\subsection{Long normal forms}
We restrict our attention to terms which are in \emph{long normal form}.
The idea of long normal form for our logic is best explained by the following
example (\cite{Urzyczyn2016}, section 5): suppose $X\colon r$ and $Y\colon
r\to p\lor q$. The long normal form of $YX$ is 
$\caseOf{YX}{a:p}{\lambda u.\inl u}{b:q}{\lambda v.\inr v}$.

Our definitions follow those of Urzyczyn, \cite{Urzyczyn2016}. We classify
normal forms into:
\begin{itemize}
\item introductions $\lambda X.N$, $\lambda x.N$, $\langle N1, N2\rangle$, $\inl
N$, $\inr N$, $\packTo{N}{y}{X}{\varphi}$,
\item proper eliminators $X$, $P N$, $\pi_i P$, $P(x)$,
\item improper eliminators $\bbot_\varphi(P)$,
$\caseOf{P}{x:\varphi_1}{N_1}{y:\varphi_2}{N_2}$,\\ $\letBeIn{x:\varphi}{N:\exists
X.\varphi}{P}$
\end{itemize}
where $P$ is a proper eliminator and $N$ is a normal form. The long normal forms
(lnfs) are defined recursively with \emph{quasi-long proper eliminators}:
\begin{itemize}
\item A quasi-long proper eliminator is a~proper eliminator where all arguments are of 
pseudo-atom type.\footnote{Note that a variable is a quasi-long proper
eliminator because all arguments is an empty set in this case.}
\item A constructor $\lambda X.N$, $\langle N_1,N_2\rangle$, $\in_i N$,
$\underline{pack}\ldots$, $\underline{let}\ldots$ is a lnf when its
arguments are lnfs.
\item A case-eliminator $\caseOf{P}{x:\varphi_1}{N_1}{y:\varphi_2}{N_2}$ is a lnf
when $N_1$ and $N_2$ are lnfs and $P$ is a quasi-long
proper eliminator.  A miracle (\emph{ex falso quodlibet}) $\bbot_\varphi(P)$ of a target type $\tau$ is a
long normal form when $P$ is a quasi-long proper eliminator of type $\varphi$.
\item A eliminator $\letBeIn{x:\varphi}{N:\exists X.\varphi}{P}$ is a lnf when
$N$ is a lnf and $P$ is a~quasi-long proper eliminator.
\end{itemize}
The usefulnes of these forms results from the following proposition,
\cite{Urzyczyn2016}.

\begin{proposition}[Long normal forms]
  If $\Gamma\vdash M:\phi$ then there is a long normal form $N$ such that
  $\Gamma\vdash N:\phi$.
\end{proposition}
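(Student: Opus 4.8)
The plan is to obtain the long normal form in two stages: first reduce $M$ to a $\beta$- and permutation-normal form, and then $\eta$-expand the result guided by the structure of its type. For the first stage, apply strong normalisation (Theorem~\ref{theorem:normalisation-first-order}) to reduce $M$ by $\to_{\beta\cup p}$ in finitely many steps to a normal form $M_0$; by subject reduction (Theorem~\ref{theorem:subject-reduction-first-order}) we still have $\Gamma\vdash M_0:\phi$. Since $M_0$ contains no permutative redexes, the term eliminated by every $\caseOf{\cdot}{\cdots}{\cdots}{\cdots}{\cdots}$, every $\letBeIn{\cdot}{\cdot}{\cdot}$ and every $\bbot$ is already a variable-headed elimination spine and not a nested eliminator, which is exactly the shape the definition of long normal form requires in those positions. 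What is still missing is only that the spines need not be fully expanded and that their arguments need not yet be long normal forms.

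The second stage performs the expansion, which I would organise as two mutually independent recursions. The outer recursion $\mathrm{long}(\cdot)$ descends structurally through $M_0$: on a constructor it recurses into the immediate subterms, and on an elimination spine $Y\,N_1\cdots N_k$ it first replaces every argument $N_i$ by $\mathrm{long}(N_i)$ and then passes the resulting spine $P$, whose arguments are by now long normal forms, to the second recursion $\mathrm{expand}(\psi,P)$, where $\psi$ is the type of $P$. The routine $\mathrm{expand}(\psi,P)$ carries out the actual $\eta$-expansion by recursion on $\psi$: for the invertible connectives it places the matching introduction over a lengthened spine, taking $\lambda x.\,\mathrm{expand}(\varphi_2,P\,\mathrm{expand}(\varphi_1,x))$ for $\psi=\varphi_1\to\varphi_2$, the pair $\pair{\mathrm{expand}(\varphi_1,\pi_1P)}{\mathrm{expand}(\varphi_2,\pi_2P)}$ for $\psi=\varphi_1\land\varphi_2$ and $\lambda X\,\mathrm{expand}(\varphi,PX)$ for $\psi=\forall X.\varphi$; for an atomic $\psi$ it returns $P$ unchanged; and for a positive pseudo-atom it wraps $P$, turning $\psi=\varphi_1\lor\varphi_2$ into $\caseOf{P}{a:\varphi_1}{\inl{\varphi_1\lor\varphi_2}{\mathrm{expand}(\varphi_1,a)}}{b:\varphi_2}{\inr{\varphi_1\lor\varphi_2}{\mathrm{expand}(\varphi_2,b)}}$ and $\psi=\exists X.\varphi$ into a $\letBeIn{\cdot}{\cdot}{\cdot}$ that re-packs its freshly bound eigenvariable. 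The disjunction clause is precisely the transformation illustrated by the example of the excerpt, where a spine of type $p\lor q$ becomes a case over two injection branches. The improper eliminators already present in $M_0$ are handled like spines: their already clean heads are made quasi-long by expanding the heads' arguments with $\mathrm{long}$, while $\mathrm{long}$ is applied to the branches and continuations.

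Two properties then remain, of which only one is delicate. Preservation of typing is routine: each clause glues an introduction rule onto the matching elimination rule, or a $(\lor E)$ or $(\exists E)$ onto the corresponding $(\lor I)$ or $(\exists I)$ of Figure~\ref{fig:rules}, and the eigenvariable side conditions of the $\forall$- and $\exists$-clauses hold because the variable bound there is chosen fresh. That the output meets the recursive definition of long normal form follows by construction, clause by clause. The genuine obstacle is termination, because $\eta$-expansion strictly enlarges the term and thus rules out a naive induction on term size. The split into two recursions is what resolves this: $\mathrm{expand}(\psi,\cdot)$ calls itself only on proper subformulas of $\psi$ — also inside the arguments it builds, where it expands freshly introduced variables at the smaller component types — so it terminates by induction on $\psi$; once $\mathrm{expand}$ is known to terminate, $\mathrm{long}$ terminates by structural induction on the finite term $M_0$, since its recursive calls land only on proper subterms. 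Finally one checks that neither recursion creates a new $\beta$- or permutative redex, as each only $\eta$-expands variable-headed spines and otherwise descends structurally; hence $\mathrm{long}(M_0)$ is a long normal form of type $\phi$, as required.
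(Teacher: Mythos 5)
The paper does not actually prove this proposition: it is imported wholesale from Urzyczyn \cite{Urzyczyn2016}, so your argument is necessarily an independent route. It is the standard one and it is essentially sound. Theorems~\ref{theorem:normalisation-first-order} and~\ref{theorem:subject-reduction-first-order} give a $\beta\cup p$-normal $M_0$ of type $\phi$, and in such a term the scrutinee of every $\underline{case}$, $\underline{let}$ and $\bbot$ is indeed forced to be a variable-headed spine (an introduction there would be a $\beta$-redex, a nested improper eliminator a permutative one), so only the type-directed $\eta$-expansion remains. Your separation into $\mathrm{long}$ (recursion on the term) and $\mathrm{expand}$ (recursion on the formula) is exactly what makes termination go through, and you correctly identify and discharge the other two obligations, namely type preservation and the absence of freshly created $\beta$- or permutative redexes (the new $\underline{case}$/$\underline{let}$ wrappers land only in argument, body and branch positions, never in head position). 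Two points deserve to be made explicit in a full write-up: the variables bound by the wrappers produced by $\mathrm{expand}$ at $\lor$- and $\exists$-types must be chosen fresh for $\Gamma$ and the goal so that the side condition of $(\exists E)$ holds, and a spine of $\lor$- or $\exists$-type that already sits in scrutinee position in $M_0$ must be exempted from $\mathrm{expand}$ so that it is not wrapped twice --- your separate clause for improper eliminators does this, but the case split between the two recursions should be stated so they cannot both fire on the same occurrence.
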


The design of automata that handle proof search in the first-order
logic requires us to find out what are the actual resources the proof
search should work with. We observe here that the proof search
process\,---\,as it is the case of the propositional intuitionistic
logic\,---\,can be restricted to formulas that occur only as
subformulas in the initial formula. Of course this time we have to
take into account first-order variables. The following proposition,
which we know how to prove for long normal forms only, sets the
observation in precise terms.
\begin{proposition}
  \label{prop:subformula-property}
  Consider a derivation of $\vdash M:\varphi$ such that $M$ is in the
  long normal form. Each judgement $\Gamma\vdash N:\psi$ that occurs
  in this derivation has the property that for each formula $\xi$ in
  $\Gamma$ and for $\psi$ there is a subformula $\xi'$ of $\varphi$
  such that $\xi = \xi'[X_1:=Y_1,\ldots,X_n:=Y_n]$ where
  $\FV{\xi'} = \{ X_1,\ldots, X_n\}$ and $Y_1,\ldots, Y_n$ are some
  first-order variables.
\end{proposition}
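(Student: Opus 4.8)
The plan is to proceed by structural induction on the derivation of $\vdash M:\varphi$, where $M$ is in long normal form. The statement is a subformula property: every formula appearing in the context $\Gamma$ or on the right of any judgement $\Gamma\vdash N:\psi$ in the derivation is an instance (under a first-order substitution) of a subformula of $\varphi$. The induction should track a strengthened invariant that makes the substitution bookkeeping precise, namely that for every judgement $\Gamma\vdash N:\psi$ occurring in the derivation, there is a subformula $\psi'$ of $\varphi$ and a substitution $\sigma$ of first-order variables for first-order variables with $\psi=\psi'\sigma$, and likewise every $\xi\in\Gamma$ is $\xi'\sigma'$ for some subformula $\xi'$ of $\varphi$.

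First I would set up the base case: the whole derivation starts from $\vdash M:\varphi$ with empty context, so $\varphi$ is trivially a subformula of itself under the identity substitution, and the context condition is vacuous. Then I would examine each inference rule from Fig.~\ref{fig:rules}, but restricted to the shapes permitted by the long normal form classification. The crucial payoff of using long normal forms is that the derivation is highly constrained: every judgement is either an introduction, whose premise works on an immediate subformula (e.g.\ $(\to I)$ moves from $\varphi_1\to\varphi_2$ to $\varphi_2$ and adds $\varphi_1$ to the context, both immediate subformulas), or it sits inside a quasi-long proper eliminator, where the head is a variable $x:\xi$ with $\xi$ already an instance of a subformula of $\varphi$ by the induction hypothesis, and each elimination step peels off one connective of $\xi$. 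The key structural fact is that eliminators only decompose a formula already present in the context into its subformulas, while introductions only build toward a subformula of $\psi$; in both directions one stays within the subformula closure of $\varphi$.

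The genuinely delicate cases are the quantifier rules, and I expect these to be the main obstacle. In $(\forall E)$ we pass from $M:\forall X.\varphi_0$ to $MY:\varphi_0[X:=Y]$, so the resulting formula is not literally a subformula but a substitution instance of one; this is exactly why the statement is phrased with the substitution $\xi'[X_1:=Y_1,\ldots,X_n:=Y_n]$ rather than with plain subformulas. I would argue that if $\forall X.\varphi_0$ equals $\xi'\sigma$ for a subformula $\xi'$, then $\varphi_0[X:=Y]$ equals $\xi''(\sigma\oplus[X:=Y])$ where $\xi''$ is the immediate subformula of $\xi'$ under the quantifier, using the operation $\oplus$ defined in the preliminaries to extend the substitution coherently. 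The dual rules $(\forall I)$, $(\exists I)$, $(\exists E)$ need the same care, and here the eigenvariable condition $X\notin FV(\Gamma,\psi)$ is what guarantees the substitutions introduced by different quantifier nodes do not clash, so that the accumulated $\sigma$ remains a well-defined renaming. The $\bind$ operation and the assumption that $\varphi$ is closed let me argue that each bound variable is introduced at a unique binder, keeping the correspondence between the substituted variables $X_i$ and free variables of the subformula $\xi'$ exact.

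Finally, I would handle the elimination of disjunction $(\lor E)$, the $\letBeIn{}{}{}$ rule for $\exists$, and $(\bot E)$: in each of these the long normal form restriction forces the scrutinee $P$ to be a quasi-long proper eliminator, so by the induction hypothesis its type is a substitution instance of a subformula of $\varphi$; the branch premises then extend the context by the immediate subformulas of that instance (for $(\lor E)$, by $\varphi_1$ and $\varphi_2$; for $\letBeIn{}{}{}$, by $\varphi$ with the witness variable freshened), each again a substitution instance of a subformula by the same $\oplus$-extension argument, while the target formula $\psi$ is unchanged and so inherits the invariant directly. Collecting these cases closes the induction and yields the claim for every judgement in the derivation.
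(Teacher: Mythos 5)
Your proposal is correct and follows essentially the same route as the paper, whose entire proof reads ``Induction over the size of the term $N$. The details are left to the reader.'' The details you supply---in particular that the long normal form forces eliminator heads to be context variables so elimination steps stay inside the subformula closure of $\varphi$, and the $\oplus$-extension of the substitution at the quantifier rules guarded by the eigenvariable condition---are precisely the bookkeeping that induction requires.
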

\begin{proof}
  Induction over the size of the term $N$. The details are left to the
  reader.\qed
\end{proof}

We can generalise the property expressed in the proposition above and
say that a formula $\psi$ \emph{emerged} from $\varphi$ when there is
a subformula $\psi_0$ of $\varphi$ and a~substitution
$[X_1:=Y_1,\ldots,X_n:=Y_n]$ with $\FVf{\psi_0}=\{X_1,\ldots,X_n\}$
such that $\psi = \psi_0[X_1:=Y_1,\ldots,X_n:=Y_n]$. We say that a
context $\Gamma$ emerged from $\varphi$ when for each its element
$x:\psi$ the formula $\psi$ emerged from $\varphi$.

\section{Arcadian Automata}
\label{sec:automata}

Our \emph{Arcadian automaton}\footnote{The name Arcadian automata
  stems from the fact that a slightly different and weaker notion of
  \emph{Eden automata} was developed before \cite{RPQNisNE} to deal
  with the fragment of the first-order intuitionistic logic with
  $\forall$ and $\to$ and in which the universal quantifier occurs
  only on positive positions.}  $\aut{A}$ is defined as a tuple
$\< \mathcal A, Q, q^0, \varphi^0, \mathcal I, i, \fv\>$, where
\begin{itemize}
\item $\mathcal A = \< A, \le\>$ is a finite tree, which formally
  describes a division of the automaton control into
  intercommunicating modules; the root of the tree is written
  $\varepsilon$; since the tree is finite we have the relation
  $\rho\succc\rho'$ when $\rho\leq \rho'$ and there is no
  $\rho''\not=\rho$ and $\rho''\not=\rho'$ such that
  $\rho\leq\rho''\leq\rho'$;
\item $Q$ is the set of states; 
\item $q^0\in Q$ is the \emph{initial} state of the automaton;
\item $\varphi^0\in A$ is the \emph{initial} tree node of the automaton;
\item ${\cal I}$ is the set of all instructions;
\item $i\colon Q\to \mathcal{P}(\mathcal{I}) $ is a function which
  gives the set of instructions \emph{available} in a~given state; the
  function $i$ must be such that every instruction belongs to exactly
  one state;
\item $\fv:A\to P(A)$ is a function that describes the binding, it
  has the property that for each node $v$ of $A$ it holds that
  $\fv(v) = \bigcup_{w\in B}\fv(w)$ where
  $B = \{ w \mid v\succc w\}$.
\end{itemize}

Each state may be either existential or universal and 
belongs to an element $a\in A$, so $Q=Q^\exists \cup Q^\forall$, and
$Q^\forall = \bigcup_{a\in A} Q_a^\forall$ and
$Q^\exists = \bigcup_{a\in A} Q_a^\exists$.  The set of states $Q$ is
divided into two disjoint sets $Q_\forall$ and $Q_\exists$ of,
respectively, universal and existential states.

\paragraph{Operational semantics of the automaton.}

An \emph{instantaneous description} (ID) of $\aut{A}$ is a tuple
$\< q, \kappa, w, w', S, V\>$ where
\begin{itemize}
\item $q\in Q$ is the current state, 
\item $\kappa$ is the current node in $A$,
\item $w:A\parfunc V$ is an interpretation of bindings associated with
  $\kappa$ by $\fv(\kappa)$, in particular we require here that
  $\fv(\kappa)\subseteq\dom{w}$,
\item $w':A\parfunc V$ is an auxiliary interpretation of bindings that
  can be stored in this register location of the ID to help implement
  some operations,
\item $S$ is a set called \emph{store}, which contains pairs
  $\<\rho, v\>$ where $\rho\in A$ and $v:A\parfunc V$, we require here
  that $\fv(\rho)\subseteq\dom{v}$,
\item $V$ is the working domain of the automaton.
\end{itemize}
The initial ID is $\<q^0, \varphi^0, \emptyset, \emptyset, \emptyset,
\emptyset\>$.

Intuitively speaking the automaton works as a device which discovers
the knowledge accumulated in the tree ${\cal A}$. It can find new
items of interest in the domain of the discourse and these are stored
in the set $V$ while the facts concerning the elements of $V$ are
stored in $S$. Traditionally, the control of the automaton is
represented by the current state $q$, which belongs to a module
indicated by $\kappa$. We can imagine the automaton as a device that
tries to check if a particular piece of information encoded in the
tree ${\cal A}$ is correct. In this view the piece of information,
which is being checked for correctness at a given point, is represented
by the current node $\kappa$ combined with its interpretation of
bindings~$w$. The interpretation of bindings $w'$ is used to
temporarily hold an interpretation of some bindings.

We have \ref{en:last-number} kinds of instructions in our automata. We
give here their operational semantics. Let us assume that we are in a
current ID $\<q,\kappa, w, w', S, V\>$.  The operation of the instructions
is defined as follows, where we assume $q'\in Q$, $\rho,\rho'\in A$.
\begin{enumerate}
\item $q: \store \rho, \rho' q'$ turns the current ID into\\
  $\<q',\rho', w, \emptyset, S\cup\{\<\rho, (w'\oplus w)|_{\fv(\rho)}\>\}, V\>$,
\item $q: \jmp \rho, q'$ turns the current ID into
  $\<q',\rho, w'', \emptyset, S, V\>$, where\\ $(w'\oplus w)|_{\fv(\kappa)}\subseteq w''$ and
  $\fv(\rho)\subseteq\dom{w''}$,
\item $q:\new \rho, q'$ turns the current ID into
  $\<q',\rho, w, \emptyset, S, V\cup\{X\}\>$, where $X\not\in V$,
\item $q:\check \rho, \rho', q'$ turns the current ID into
  $\<q, \rho', w, \emptyset, S, V\>$, the instruction is applicable only when
  an additional condition is met that there is a pair $\<\rho, v\>\in S$
  such that $v(\rho) = w(\kappa)$,
\item $q:\instl \rho, \rho', q'$ turns the current ID into
  $\<q', \rho', w, \emptyset, S\cup\{\<\rho, w''|_{\fv(\rho)}\>\}, V\cup\{X\}\>$, the instruction is
  applicable only when an additional condition is met
  that there is a node $\rho''\in A$ such that
  $\rho''\succc \rho$ and $w'' = ([\rho'':=X]\oplus w')\oplus w$
  and $X\not\in V$,
\item $q:\instr \rho, q'$ turns the current ID into
  $\<q', \rho, w'', \emptyset, S, V\>$, the instruction is applicable only when
  an additional condition is met that $\kappa\succc\rho$ and
  $w'' = [\gamma:=X]\oplus w|_{\fv(\rho)}$, where
  $\gamma\in\fv(\rho)\backslash\fv(\kappa)$, and $X\in V$,
\item $q:\load \rho, q'$ turns the current ID into
  $\<q',\rho, w'', v, S, V\>$, where\\
  $(w'\oplus w)|_{\fv(\kappa)}\subseteq w''$ and
  $\fv(\rho)\subseteq\dom{w''}$, and $v:A\parfunc V$.
  \label{en:last-number}
\end{enumerate}
These instructions abstract the basic operations associated with the
process of proving in predicate logic.  Observe that the content of
the additional register loaded by the instruction $\load$ can be used
only for the immediately following instruction as all the other
instructions erase the content of the register. 

It is also interesting to observe that the set of instructions contains
in addition to standard assembly-like instructions two instructions
$\instl$, $\instr$ that deal with pattern instantiation.

\newcounter{rcount}
\setcounter{rcount}{1}
\makeatletter
\newcommand{\rnumber}[1]{%
(\arabic{rcount})%
\protected@write \@auxout {}{\string
  \newlabel{#1}{{\arabic{rcount}}{}{}{}{}}}%
\stepcounter{rcount}}

\makeatother

\begin{figure}[tb]
  \centering
  \begin{displaymath}
    \begin{array}{@{}ll@{\;\;\;}l@{}}
      \multicolumn{3}{c}{\mbox{\textbf{Structural decomposition instructions}}} \\
      \hline
      \hline\\
      \rnumber{ins:to}&
       \varphi_1\to\varphi_2 
      & q_{\varphi_1\to\varphi_2}^\forall:\store \varphi_1,\varphi_2,
        q_{\varphi_2}^\exists\\
    & & \Rightarrow\!
        \<q_{\varphi_1\to\varphi_2}^\forall,\varphi_1\!\to\!\varphi_2,w,\emptyset,S,V\>\!\to\!%
        \<q_{\varphi_2}^\exists,\varphi_2,w,\emptyset,S\cup\{\<\varphi_1,w|_{\fv(\varphi_1)}\>\},V\>
      \\[0.5ex]
      \hline\\[-1.5ex]
      \rnumber{ins:conj}&
      \varphi_1\land\varphi_2 
       & q_{\varphi_1\land\varphi_2}^\forall:\jmp \varphi_1,
        q_{\varphi_1}^\exists\\[0.5ex]
     & & \Rightarrow\;\;
        \<q_{\varphi_1\land\varphi_2}^\forall,\varphi_1\land\varphi_2,w,\emptyset,S,V\>\to%
        \<q_{\varphi_1}^\exists,\varphi_1,w,\emptyset,S,V\>\\[0.5ex]

     & & q_{\varphi_1\land\varphi_2}^\forall:\jmp \varphi_2,
        q_{\varphi_2}^\exists\\
     & & \Rightarrow\;\;
        \<q_{\varphi_1\land\varphi_2}^\forall,\varphi_1\land\varphi_2,w,\emptyset,S,V\>\to%
        \<q_{\varphi_2}^\exists,\varphi_2,w,\emptyset,S,V\>\\[0.5ex]

      \hline\\[-1.5ex]
      \rnumber{ins:disI}&
         \varphi_1\lor\varphi_2
       & q_{\varphi_1\lor\varphi_2}^\exists:\jmp \varphi_1,
        q_{\varphi_1}^\exists\\[0.5ex]
     & & \Rightarrow\;\;
        \<q_{\varphi_1\lor\varphi_2}^\exists,\varphi_1\lor\varphi_2,w,\emptyset,S,V\>\to%
        \<q_{\varphi_1}^\exists,\varphi_1,w,\emptyset,S,V\>\\[0.5ex]

     & & q_{\varphi_1\lor\varphi_2}^\exists:\jmp \varphi_2,
        q_{\varphi_2}^\exists\\[0.5ex]
     & & \Rightarrow\;\;
        \<q_{\varphi_1\lor\varphi_2}^\forall,\varphi_1\lor\varphi_2,w,\emptyset,S,V\>\to%
        \<q_{\varphi_2}^\exists,\varphi_2,w,\emptyset,S,V\>\\[0.5ex]

      \hline\\[-1.5ex]
      \rnumber{ins:forall}&
      \forall X.\varphi 
       & q_{\forall X.\varphi}^\forall:\new \varphi,
        q_\varphi^\exists\\[0.5ex]

     & & \Rightarrow\;\;
        \<q_{\forall X.\varphi}^\forall,\forall X.\varphi,w,\emptyset,S,V\>\to%
        \<q_{\varphi}^\exists,\varphi,[\forall X.\varphi:=Y]\oplus w,\emptyset,S,V\cup\{Y\}\>\\
     & & \qquad \mbox{where $Y\not\in V$}\\[0.5ex]

      \hline\\[-1.5ex]
      \rnumber{ins:exists}&
      \exists X.\varphi 
       & q_{\exists X.\varphi}^\forall:\instr \varphi,
        q_\varphi^\exists\\[0.5ex]

     & & \Rightarrow\;\;
        \<q_{\exists X.\varphi}^\forall,\exists X.\varphi,w,\emptyset,S,V\>\!\to\!%
        \<q_{\varphi}^\exists,\varphi,[\exists X.\varphi:=Y]\oplus w|_{\fv(\exists X.\varphi)},\emptyset,S,V\>\\
     & & \qquad\mbox{where $Y\in V$}
    \end{array}
  \end{displaymath}
  \caption{Structural decomposition instructions of the automaton}
  \label{fig:structural-instructions}
\end{figure}

The following notion of acceptance is defined inductively.  We say
that the automaton $\aut{A}$ \emph{eventually accepts} from an ID
$a=\<q, \kappa, w, w', S, V\>$ when
\begin{itemize}
\item $q$ is universal and there are no instructions available in
  state $q$ (i.e. $i(q)=\emptyset$, such states are called
  \emph{accepting states}), or
\item $q$ is universal and, for each instruction $i$ available in $q$,
  the automaton started in an ID $a'$ eventually accepts,
  where $a'$ is obtained from $a$ by executing $i$,
\item if $q$ is existential and, for some instruction $i$ available in
  state $q$ the automaton started in an ID $a'$ eventually
  accepts, where $a'$ is obtained from $a$ by executing $i$.
\end{itemize}

The definition above actually defines inductively a certain kind of
tree, the nodes of which are IDs and children of a node are determined
by the configurations obtained by executing of available instructions.
Actually, we can view the process described above not only as a
process of reaching acceptance, but also as a process of recognising
of the tree.  In this light the automaton is eventually accepting from
an initial configuration if the language of its `runs' is not empty.
As a result we can talk about the acceptance of such automata by
referring to the \emph{emptiness problem}.

Here is a basic monotonicity property of the automata.

\begin{proposition}
  \label{prop:basic}
  If the automaton $\aut{A}$ eventually accepts from
  $\<q,\kappa, w, w', S, V\>$ and $w\subseteq w''$ then the automaton $\aut A$
  eventually accepts from $\<q,\kappa, w'', w', S, V\>$.
\end{proposition}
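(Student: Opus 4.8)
The plan is to argue by induction on the well-founded acceptance tree witnessing that $\aut A$ eventually accepts from $a=\langle q,\kappa,w,w',S,V\rangle$, establishing the claim simultaneously for every extension of the first register; write $\bar w$ for the function $w''$ of the statement, so $w\subseteq\bar w$ and $\bar a=\langle q,\kappa,\bar w,w',S,V\rangle$. Note first that $\bar a$ is again a legal ID, since $\fv(\kappa)\subseteq\dom{w}\subseteq\dom{\bar w}$ and $\bar w$ still maps into $V$. The base case is immediate: if $q$ is universal with $i(q)=\emptyset$, then $q$ is accepting in $\bar a$ as well. For the inductive step I would treat the universal and existential cases through a single transfer lemma describing one step of computation.

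The core of the argument is that the surplus of $\bar w$ over $w$ is invisible to every instruction. I would first record the elementary fact that, for any partial function $u$ and any set $D\subseteq\dom{u}\cup\dom{w}$, one has $(u\oplus w)|_{D}=(u\oplus\bar w)|_{D}$ whenever $w\subseteq\bar w$, since on $D$ both sides take the value of $u$ where $u$ is defined and the common value $w(x)=\bar w(x)$ otherwise. Applying this with $u=w'$ and $D=\fv(\kappa)$ (which lies in $\dom{w}$ by well-formedness) shows that the constraint $(w'\oplus w)|_{\fv(\kappa)}\subseteq w''$ used by $\jmp{}$ and $\load{}$ is \emph{literally the same} for $a$ and for $\bar a$, so these instructions have identical sets of successor IDs from $a$ and from $\bar a$. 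Applying it with $D=\fv(\rho)$ (contained in $\dom{w'\oplus w}$ by the store condition $\fv(\rho)\subseteq\dom{v}$) shows that the store entry $(w'\oplus w)|_{\fv(\rho)}$ created by $\store{}$, and likewise the entry created by $\instl{}$, is unchanged under the enlargement; a direct computation gives the same for the register $[\gamma:=X]\oplus w|_{\fv(\rho)}$ of $\instr{}$, because $\gamma$ is overwritten and $\fv(\rho)\setminus\{\gamma\}\subseteq\dom{w}$. Finally the guard of $\check{}$ depends only on $S$ and on the value $w(\kappa)=\bar w(\kappa)$, hence is equivalent for $a$ and $\bar a$. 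In every case, then, for each admissible choice of the instruction's nondeterministic parameters the successor produced from $\bar a$ differs from the corresponding successor of $a$ only in that its first register extends the latter, with state, node, auxiliary register, store and working domain all unchanged; for $\jmp{}$, $\instr{}$ and $\load{}$ the two successors in fact coincide.

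With this lemma the induction closes routinely. If $q$ is existential, fix the instruction and successor $a'$ witnessing acceptance; the matching successor $\bar a'$ of $\bar a$ extends $a'$ only in its first register, so the induction hypothesis on the sub-derivation for $a'$ gives acceptance from $\bar a'$, hence from $\bar a$. If $q$ is universal, every successor $\bar a'$ of $\bar a$ comes from an instruction and a parameter choice that is also admissible for $a$ (the admissible-parameter sets coincide, as shown), yielding a successor $a'$ of $a$ with $a'$ below $\bar a'$ in the above sense; since $a$ accepts from every such $a'$, the induction hypothesis gives acceptance from each $\bar a'$, and therefore from $\bar a$.

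The point I would be most careful about is the bookkeeping of the nondeterministic choices against the standing well-formedness invariants, namely that the current node is interpreted ($\kappa\in\dom{w}$, so that $w(\kappa)$ is defined) together with $\fv(\kappa)\subseteq\dom{w}$ and $\fv(\rho)\subseteq\dom{v}$ for stored pairs. These invariants are exactly what forces the enlargement to live outside the sets through which the instructions read $w$, and hence what guarantees that in the universal case the successor sets genuinely match up rather than merely being related in one direction; everything else reduces to the monotonicity of $\oplus$ and of restriction, which is routine.
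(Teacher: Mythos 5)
Your proposal is correct and follows exactly the route the paper indicates: induction over the inductive definition of eventual acceptance, which is all the paper says before leaving the details to the reader. Your instruction-by-instruction transfer lemma (showing that the surplus of $w''$ over $w$ is invisible to every instruction because each one reads the first register only through restrictions to sets contained in $\dom{w}$) is precisely the bookkeeping those omitted details amount to, including the one genuinely delicate point that in universal states the successor sets must match in both directions.
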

\begin{proof}
  Induction over the definition of the configuration from which
  automaton eventually accepts. The details are left to the reader.\qed
\end{proof}

\subsection{From formulas to automata}

We can now define an Arcadian automaton
$\aut{A}_{\varphi}= \<\mathcal A, Q, q_\varphi^\exists, 
\varphi, \mathcal I, i, \fv\>$
that corresponds to provability of the formula $\varphi$. For
technical reasons we assume that the formula is closed. This
restriction is not essential since the provability of a~formula with
free variables is equivalent to the provability of its universal closure.
The components of the automaton are as follows.

\begin{itemize}
  \item ${\mathcal A} = \< A,\leq \>$ is the syntax tree of the formula $\varphi$.
  \item $Q = \{ q_{\psi}^\forall, q_{\psi}^\exists, 
              q_{\psi,\lor}^\forall, q_{\psi,\to}^\forall,
              q_{\psi,\exists}^\forall, q_{\psi,\bot}^\forall
              \mid \mbox{ for all subformulas } \psi \mbox{ of }
              \varphi\}$. The states annotated with the superscript
              $\forall$ belong to $Q^\forall$ while the states with
              the superscript $\exists$ belong to $Q^\exists$.
  \item $q_\varphi^\exists$ is the initial state (which
    means the goal of the proving process is $\varphi$).
  \item The initial state and initial tree node are
    $q_\varphi^\exists$ and $\varphi$, respectively.
  \item ${\mathcal I}$ and $i$ are presented in
    Fig.~\ref{fig:structural-instructions} and
    \ref{fig:instructions}. We describe them in more detail 
    below.
  \item $\fv:A\to P(A)$ is defined so that  $\fv(\psi) =\{ \bind_\varphi(\psi,X) \mid
    X\in\FV{\psi}\}$.
\end{itemize}

\begin{figure}[htbp]
  \centering
  \begin{displaymath}
    \begin{array}{ll@{\quad}l}
      \multicolumn{3}{c}{\mbox{\textbf{Non-structural instructions}}}\\
      \hline
      \hline\\
      \rnumber{ins:toforall}&
        & q_\varphi^\exists:\jmp \varphi, q_\varphi^\forall\\ 
      & & \Rightarrow\;\;
        \<q_{\varphi}^\exists,\varphi,w,\emptyset,S,V\>\to%
        \<q_{\varphi}^\forall,\varphi,w,\emptyset, S,V\>\\[0.5ex]

      \rnumber{ins:conjE}&
        & q_{\varphi_i}^\exists:\jmp \varphi_1\land\varphi_2,
          q_{\varphi_1\land\varphi_2}^\exists \mbox{ for } i=1,2\\
      & & \Rightarrow\;\;
        \<q_{\varphi}^\exists,\varphi,w,\emptyset,S,V\>\to%
        \<q_{\varphi_1\land\varphi_2,\land}^\forall,\varphi_1\land\varphi_2,w'',\emptyset,S,V\>\\[0.5ex]

      \rnumber{ins:orleft}&
        & q_\varphi^\exists:\load \varphi, q_{\varphi,\lor}^\forall\\
      & & \Rightarrow\;\;
        \<q_{\varphi}^\exists,\varphi,w,\emptyset,S,V\>\to%
        \<q_{\varphi,\lor}^\forall,\varphi,w,w',S,V\>\\[0.5ex]

      \rnumber{ins:toleft}&
        & q_\varphi^\exists:\jmp \varphi, q_{\varphi,\to}^\forall\\
      & & \Rightarrow\;\;
        \<q_{\varphi}^\exists,\varphi,w,\emptyset,S,V\>\to%
        \<q_{\varphi,\to}^\forall,\varphi,\hat{w},\emptyset,S,V\> \quad
          \mbox{ where } w\subseteq \hat{w}\\[0.5ex]

      \rnumber{ins:forallE}&
        & q_\varphi^\exists:\jmp \forall X.\varphi, q_{\forall X.\varphi}^\exists\\
      & & \Rightarrow\;\;
        \<q_{\varphi}^\exists,\varphi,w,\emptyset,S,V\>\to%
        \<q_{\forall X.\varphi}^\exists,\forall X.\varphi,w,\emptyset,S,V\>\\[0.5ex]

      \rnumber{ins:existsE}&
        & q_\varphi^\exists:\load \varphi, q_{\varphi,\exists}^\forall\\
      & & \Rightarrow\;\;
        \<q_{\varphi}^\exists,\varphi,w,\emptyset,S,V\>\to%
        \<q_{\varphi,\exists}^\forall,\varphi,\hat{w},w',S,V\>\\[0.5ex]

      \rnumber{ins:bot}&
        & q_\varphi^\exists:\jmp \varphi, q_{\varphi,\bot}^\forall\\
      & & \Rightarrow\;\;
        \<q_{\varphi}^\exists,\varphi,w,\emptyset,S,V\>\to%
        \<q_{\varphi,\bot}^\forall,\varphi,w,\emptyset,S,V\>\\[0.5ex]

      \hline\\[-1.5ex]
      \rnumber{ins:var}&
        & q_{\varphi}^\exists:\check
        \varphi,\varphi,q_{\axiom}^\forall\\[0.5ex]
      & & \Rightarrow\;\;
        \<q_{\varphi}^\exists,\varphi,w,\emptyset,S,V\>\to%
        \<q_{\varphi}^\exists,\varphi,w,\emptyset,S,V\>\\[0.5ex]
      \hline\\[-1.5ex]

      \rnumber{ins:orEor}& 
       & q_{\varphi,\lor}^\forall: \jmp
        \psi_1\lor\psi_2,q_{\psi_1\lor\psi_2}^\exists\\
      & & \Rightarrow\;\;
        \<q_{\varphi,\lor}^\forall,\varphi,w,w',S,V\>\to%
        \<q_{\psi_1\lor\psi_2}^\exists,\psi_1\lor\psi_2,w',\emptyset,S,V\>\\[0.5ex]

      \rnumber{ins:orEone}&
       & q_{\varphi,\lor}^\forall: \store
        \psi_1,\varphi,q_{\varphi}^\exists\\
      & & \Rightarrow\;\;
        \<q_{\varphi,\lor}^\forall,\varphi,w,w',S,V\>\to%
        \<q_{\varphi}^\exists,\varphi,w',\emptyset,S',V\>\\[0.5ex]
      & & \mbox{where } S' = S\cup\{\<\psi_1,w'|_{\fv(\psi_1)}\>\}\\[0.5ex]

      \rnumber{ins:orEtwo}&
       & q_{\varphi,\lor}^\forall: \store
        \psi_2,\varphi,q_{\varphi}^\exists\\
      & & \Rightarrow\;\;
        \<q_{\varphi,\lor}^\forall,\varphi,w,w',S,V\>\to%
        \<q_{\varphi}^\exists,\varphi,w',\emptyset,S',V\>\\[0.5ex]
      & & \mbox{where } S' = S\cup\{\<\psi_2,w'|_{\fv(\psi_2)}\>\}\\[0.5ex]
      \multicolumn{3}{c}{$\scriptsize (\ref{ins:orEone}) and (\ref{ins:orEtwo}) should
      be instantiated with $\psi_1$ and $\psi_2$'s which were used in 
      (\ref{ins:orEor}$).}\\
      \hline\\[-1.5ex]

      \rnumber{ins:toEleft}&
       & q_{\varphi,\to}^\forall: \jmp
        \psi\to\varphi,q_{\psi\to\varphi}^\exists\\
      & & \Rightarrow\;\;
        \<q_{\varphi,\to}^\forall,\varphi,w,\emptyset,S,V\>\to%
          \<q_{\psi\to\varphi}^\exists,\psi\to\varphi,w,\emptyset,S,V\>\\[0.5ex]

      \rnumber{ins:toEright}&
       & q_{\varphi,\to}^\forall: \jmp
        \psi,q_{\psi}^\exists\\[1ex]
      & & \Rightarrow\;\;
        \<q_{\varphi,\to}^\forall,\varphi,w,\emptyset,S,V\>\to%
          \<q_{\psi}^\exists,\psi,w,\emptyset,S,V\>\\[0.5ex]
      \multicolumn{3}{c}{$\scriptsize (\ref{ins:toEright}) should
      be instantiated with $\psi$ and $\varphi$'s which were used in 
      (\ref{ins:toEleft}$).}\\
      \hline\\[-1.5ex]

      \rnumber{ins:existsEone}& 
       & q_{\varphi,\exists}^\forall: \jmp
        \exists X.\psi,q_{\exists X.\psi}^\exists\\
      & & \Rightarrow\;\;
        \<q_{\varphi,\exists}^\forall,\varphi,w,w',S,V\>\to%
        \<q_{\exists X.\psi}^\exists,\exists X.\psi,w',\emptyset,S',V\>\\[0.5ex]

      \rnumber{ins:existsEtwo}&
       & q_{\varphi,\exists}^\forall: \instl
        \psi,\varphi,q_{\varphi}^\exists\\
      & & \Rightarrow\;\;
        \<q_{\varphi,\exists}^\forall,\varphi,w,w',S,V\>\to%
        \<q_{\varphi}^\exists,\varphi,w,\emptyset,S',V\>\\[0.5ex]
     & & \mbox{where } w'' = ([\exists X.\psi:=X]\oplus w')\oplus w, 
         S' = S\cup\{\<\psi,w''|_{\fv(\psi)}\>\}\\[0.5ex]
      \multicolumn{3}{c}{$\scriptsize (\ref{ins:existsEone}) should
      be instantiated with $\psi$ and $\varphi$'s which were used in 
      (\ref{ins:existsEtwo}$).}\\
      \hline\\[-1.5ex]

      \rnumber{ins:botE}&
      & q_{\varphi,\bot}^\forall:\jmp \bot,q_\bot^\exists\\
      & & \Rightarrow\;\;
        \<q_{\varphi,\bot}^\forall,\varphi,w,\emptyset,S,V\>\to%
        \<q_{\varphi,\bot}^\forall,\bot,w,\emptyset,S,V\>\\[0.5ex]
    \end{array}
  \end{displaymath}

  \caption{Non-structural instructions of the automaton}
  \label{fig:instructions}
\end{figure}

Fig.~\ref{fig:structural-instructions} and~\ref{fig:instructions}
present the patterns of possible instructions in ${\cal I}$.  Each of
the instruction patterns starts with a state of the form
$q^\triangledown_\psi$ or of the form $q^\triangledown_{\psi,\bullet}$
where $\triangledown$ is a quantifier ($\forall$ or $\exists$), $\psi$
is a subformula of $\varphi$ and $\bullet$ is one of the symbols
$\lor, \to, \bot,\exists$. For each of the patterns we assume
${\cal I}$ contains all the instructions that result from
instantiating the pattern with all possible subformulas that match the
form of $\psi$ (e.g.\ in case $\psi = \psi_1\to\psi_2$ we take all the
subformulas with $\to$ as the main symbol). The function
$i:Q\to P({\cal I})$ is defined so that for a state $q^Q_\psi$ it
returns all the instructions which start with the state. In addition to
the instructions they present the way a configuration is transformed
by each of the instructions. This serves to facilitate understanding
the proofs.

As the figure suggests, the instructions of the automaton can be
divided into two groups\,---\,structural decomposition instructions
and non-structural ones.  The structural instructions are used to
decompose a formula into its structural subformulas. On the left-hand
side of each of the structural instructions we present the formula
the instruction decomposes. The other rules represent operations that
manipulate other elements of configuration with possible change of the
goal formula, see example below for illustration.

\paragraph{Example.} Consider the formula $\varphi=\forall x(P(x))\to\forall y\exists x
P(x)$. In order to build the Arcadian automaton for that formula first we have
to build the tree $A$ of it, which is shown in Fig.~\ref{fig:syntree}.\hfil~
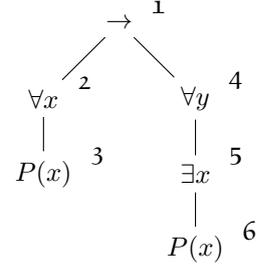
\begin{wrapfigure}[7]{r}[-0em]{0.36\textwidth}

\vspace{-9ex}

\centering
\begin{tikzpicture}
	[level distance = 1cm]
	\node (is-root) { $\to $}
		[sibling distance=2cm]
		child { node (two){$\forall x $} 
			child { node (three) { $P(x)$ } }
		}
		child {
			node (four){$\forall y$}
				child { node (five){$\exists x$}
					child { node (six){$P(x)$} }
				}
		};
	\node[anchor=south west] at (is-root.east) {$\stateno 1$};
	\node[anchor=south west] at (two.east) {$\stateno 2$};
	\node[anchor=south west] at (three.east) {$\stateno 3$};
	\node[anchor=south west] at (four.east) {$\stateno 4$};
	\node[anchor=south west] at (five.east) {$\stateno 5$};
	\node[anchor=south west] at (six.east) {$\stateno 6$};
\end{tikzpicture}\\[-1ex]
\caption{Syntax tree of the formula.}
\label{fig:syntree}
\end{wrapfigure}

\noindent
The instructions available ($\mathcal I$) are:\\[1ex]
\begin{tabular}{rlllrlllrlll}
 (1) \iline \forall 1 {\store \stateno 2, \stateno 4, \stateid\exists 4} (19) \ixline \forall 1 \exists {\jmp \stateno 5, \stateid\exists 5}			\\% (19) \ixline \forall 4 \to {\jmp \stateno 5, \stateid\exists 5} \\
 (4) \iline \forall 2 {\new \stateno 3, \stateid\exists 3}		 (19) \ixline \forall 4 \exists {\jmp \stateno 5, \stateid\exists 5}			\\% (19) \ixline \forall 4 \to {\jmp \stateno 6, \stateid\exists 6} \\
 (4) \iline \forall 4 {\new \stateno 5, \stateid\exists 5}		 (19) \ixline \forall 5 \exists {\jmp \stateno 5, \stateid\exists 5}			\\% (19) \ixline \forall 5 \to {\jmp \stateno 6, \stateid\exists 6} \\
 (5) \iline \forall 5 {\instr \stateno 6, \stateid\exists 6}		 (20) \ixline \forall 1 \exists {\instl \stateno 5, \stateno 1, \stateid\exists 1}	\\% (20) \ixline \forall 1 \exists {\jmp \stateno 5, \stateid\exists 5} \\
(10) \iline \exists 3 {\jmp \stateno 2, \stateid\exists 2}		 (20) \ixline \forall 4 \exists {\instl \stateno 5, \stateno 4, \stateid\exists 4}	\\% (20) \ixline \forall 2 \exists {\jmp \stateno 5, \stateid\exists 5} \\
(10) \iline \exists 6 {\jmp \stateno 2, \stateid\exists 2}		 (20) \ixline \forall 5 \exists {\instl \stateno 5, \stateno 5, \stateid\exists 5}	\\% (20) \ixline \forall 4 \exists {\jmp \stateno 5, \stateid\exists 5} \\
\end{tabular}\\[-0.5ex]

\noindent 
the instructions available for any $a\in A$~are\\[-0.5ex]

\noindent
\begin{tabular}{rlllrlllrlll}
 (6)&\ailine \exists	  {\jmp a,	q^\forall_a} &
 (8)&\ailine \exists	  {\load a,	q^\forall_{a, \lor}} &
 (9)&\ailine \exists	  {\jmp a,	q^\forall_{a,\to}} \\
(11)&\ailine \exists	  {\load a,	q^\forall_{a,\exists}} &
(12)&\ailine \exists	  {\jmp a,	q^\forall_{a,\bot}} &
(13)&\ailine \exists	  {\check a, a, q^\forall_\axiom} \\
(21)&\aixline \forall \to {\jmp \bot,	q^\exists_\bot} \\
\end{tabular}\\[-0.5ex]

\noindent
The set of states can be easily written using the definition. To
calculate $\fv$ we need to calculate $\bind$s first. We have
$\bind_{\stateno 1} (\stateno 3, x) = \stateno 2$ and
$\bind_{\stateno 1} (\stateno 6, x) = \stateno 5$; therefore
$\fv(\stateno 3) = \left\{ \stateno 2 \right\}$,
$\fv(\stateno 6) = \left\{ \stateno 5 \right\}$ and
$\fv(\textrm{otherwise}) = \emptyset$. $q^0 = \stateid\exists 1$ and
$\varphi^0=\varphi$. The initial ID is $q=q^{\exists}_{\stateno 1}$,
$\kappa=\stateno 1$, and the other elements of the description are
empty sets. A successful run of the automaton is as follows: 
$\jmp \stateno 1,\stateid{\forall}{1}$ (rule (6), initial instruction
leads to the structural decomposition of the main connective $\to$);
$\store \stateno 2, \stateno 4, \stateid\exists 4$ (r. (1), as the
result of the decomposition, the formula at the node $\stateno 2$ is
moved to the context, and the formula at $\stateno 4$ becomes the
proof goal);
$\jmp \stateno 4,\stateid\forall 4$ (r. (6), 
we progress to the structural decomposition of $\forall$);
$\new \stateno 5,\stateid\exists 5$ (r. (4),
we introduce fresh eigenvariable, say $X_1$, for the universal
quantifier);
$\jmp \stateno 5,\stateid\forall 5$ (r. (6), 
we progress to the structural decomposition of $\exists$);
$\instr \stateno 6,\stateid\exists 6$ (r. (5),
we produce a witness for the existential quantifier, which can be just
$X_1$);
$\jmp \stateno 2,\stateid\exists 2$ (r. (10), we progress now with the
non-structural rule that handles instantiation of the universal
assumption from the node $2$); 
and now we can conclude with 
$\check \stateno 2, \stateno 2, q_{\axiom}$ (r. (13)) that
directly leads to acceptance.

\paragraph{From derivability questions to IDs} A proof search process
in the style of Ben-Yelles~\cite{BenYelles79} works by solving
derivability questions of the form $\Gamma\vdash ?:\psi$.  We relate
this style of proof search to our automata model by a translation of
such a~question into an ID of the automaton. Suppose that the initial
closed formula is $\varphi$. We define the configuration of
$\aut{A}_\varphi$ that corresponds to $\Gamma\vdash ?:\psi$ by
exploiting the conclusion of
Proposition~\ref{prop:subformula-property}. This proposition makes it
possible to associate a substitution $w_\psi$ with $\psi$ and $w_\xi$ with
each assignment $x:\xi\in\Gamma$. The resulting configuration is
  $a_{\Gamma,\psi} = \<q_{\psi_0}^\exists,\psi_0, w_\psi,\emptyset, S_{\Gamma,\psi},V_{\Gamma,\psi}\>$
where $S_{\Gamma,\psi} = \{ \<\xi,\psi_\xi\> \mid x:\xi\in\Gamma \}$
and $V_{\Gamma,\psi} = \FVf{\Gamma,\psi}$ as well as
$w_\psi(\psi_0) = \psi$.

\begin{lemma}
  \label{lemma:judgments-to-configurations}
  If $\Gamma\vdash M:\psi$ is derivable and such that $\Gamma$ and
  $\psi$ emerged from $\varphi$ then $\aut{A}_{\varphi}$ eventually
  accepts from the  configuration
  $\<q_{\psi_0}^\exists,\psi_0, w_\psi,\emptyset,
  S_{\Gamma,\psi},V_{\Gamma,\psi}\>$.
\end{lemma}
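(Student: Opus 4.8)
The plan is to prove this by induction on the structure of the derivation of $\Gamma\vdash M:\psi$, where $M$ is in long normal form (which we may assume by the Long Normal Forms proposition). The overall strategy is to show that each shape of long-normal-form term corresponds to a sequence of automaton instructions that transforms the configuration $a_{\Gamma,\psi}$ into configurations from which the automaton eventually accepts, mirroring the logical rule at the root of the derivation. Since the automaton alternates between existential states $q^\exists_{\psi_0}$ (where it chooses how to attack the goal) and universal states (where it commits to structural decomposition), the induction hypothesis will be applied to the immediate subterms, whose types and contexts still emerge from $\varphi$ by Proposition~\ref{prop:subformula-property}.

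First I would set up the case analysis according to the three classes of long normal forms. For an \emph{introduction} at the goal $\psi_0$, I would fire the matching existential rule followed by the structural decomposition instruction for the main connective of $\psi_0$: e.g.\ for $\psi_0=\varphi_1\to\varphi_2$ and $M=\lambda x.N$, instruction~(\ref{ins:toforall}) moves to $q^\forall_{\psi_0}$ and then the structural rule~(\ref{ins:to}) stores $\varphi_1$ into $S$ and moves the goal to $\varphi_2$, yielding exactly the configuration $a_{\Gamma,x:\varphi_1;\varphi_2}$ to which the induction hypothesis applies to $N$. The cases for $\land$, $\lor$, $\forall$, $\exists$ follow the same template, using rules~(\ref{ins:conj}), (\ref{ins:disI}), (\ref{ins:forall}) and (\ref{ins:exists}) respectively; the $\new$ instruction for $\forall$-introduction realises the eigenvariable condition by adding a fresh $Y$ to $V$, and the $\instr$ instruction for $\exists$-introduction picks the witness already present in $V$.

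For a \emph{proper eliminator} headed by a variable $x:\xi\in\Gamma$, the corresponding axiom pair $\<\xi,\psi_\xi\>$ already lives in $S_{\Gamma,\psi}$, so I would use the elimination-chasing rules~(\ref{ins:conjE}), (\ref{ins:toleft})/(\ref{ins:toEleft}), (\ref{ins:forallE}) to walk up the spine of the assumption's type, discharging each argument via the induction hypothesis, and close with the $\check$ instruction~(\ref{ins:var}) once the head atom matches the goal, the applicability condition $v(\rho)=w(\kappa)$ being guaranteed by the emergence of $\psi$ and $\xi$ from a common substitution instance. The \emph{improper eliminators} $\caseOf{\cdots}{}{}{}{}$, $\bbot_\varphi$, and $\letBeIn{}{}{}$ are handled by the $\load$-based groups~(\ref{ins:orleft})/(\ref{ins:orEor})--(\ref{ins:orEtwo}), (\ref{ins:bot})/(\ref{ins:botE}), and (\ref{ins:existsE})/(\ref{ins:existsEone})--(\ref{ins:existsEtwo}): here the quasi-long proper eliminator $P$ is first evaluated to reach a pseudo-atom of disjunctive, false, or existential type, which is then unpacked, with the auxiliary register $w'$ carrying the binding interpretation needed by the subsequent $\store$ or $\instl$ step.

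The main obstacle will be the bookkeeping of binding interpretations, i.e.\ verifying that the $w$, $w'$ components produced by each transition agree with the substitutions $w_\psi$, $w_\xi$ prescribed by $a_{\Gamma,\psi}$. Concretely, the delicate points are the $\exists$-elimination and $\lor$-elimination cases, where the freshly introduced eigenvariable recorded by $\instl$ (through the clause $w''=([\exists X.\psi:=X]\oplus w')\oplus w$) must be shown fresh with respect to $V_{\Gamma,\psi}$ and consistent with the side condition $\fv(\rho)\subseteq\dom{w''}$ demanded by the ID invariant, and likewise the restriction operations $w''|_{\fv(\rho)}$ must be reconciled with the definition of $\fv$ via $\bind$. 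I expect these verifications to reduce to routine checks once the invariant relating an ID to an emerged judgement is stated precisely, and Proposition~\ref{prop:basic} will let me relax the binding component $w$ to a larger $w''$ whenever a transition supplies extra bindings, so that the induction hypothesis still applies. The remaining cases are symmetric, and the details would be left to careful but mechanical case-checking.
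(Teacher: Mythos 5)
Your proposal is correct and follows essentially the same route as the paper: induction over the derivation of the long-normal-form term, simulating each introduction rule by instruction~(\ref{ins:toforall}) followed by the matching structural instruction, each eliminator by the corresponding non-structural group ending in~(\ref{ins:var}), and using Proposition~\ref{prop:basic} to enlarge the binding interpretation where needed. Your grouping of cases by the three classes of long normal forms rather than by the last inference rule is only a presentational difference, and the bookkeeping issues you flag (the $w$, $w'$ components, freshness for $\instl$, and the restriction to $\fv(\rho)$) are exactly the details the paper also treats as routine.
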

\begin{proof}
  We may assume that $M$ is in the long normal form.  The proof is by
  induction over the derivation of $M$.  We give here only the
  most interesting cases.

  If the last rule is $(var)$,  we can apply the
  instruction (\ref{ins:var}) that checks if the
  formula $w_{\psi}(\psi_0)$ is in  $S_{\Gamma,\psi}$. Then
  the resulting state $q_{\axiom}^\forall$ is an accepting state.

  If the last rule is the $(\land I)$ rule then
  $\psi = \psi_1\land \psi_2$ and we have shorter derivations for
  $\Gamma\vdash M_1:\psi_1$ and $\Gamma\vdash M_2:\psi_2$, which by
  induction hypothesis give that $\aut{A}_\varphi$ eventually accepts
  from the configurations
  $\<q_{\psi_{i0}}^\exists,\psi_{i0}, w_{\psi_i},\emptyset,
  S_{\Gamma,\psi_i},V_{\Gamma,\psi_i}\>.$ for $i=1,2$ where we note
  that $w_{\psi_i}=w_\psi$, $S_{\Gamma,\psi_i} = S_{\Gamma,\psi}$ and
  $V_{\Gamma,\psi_i} = V_{\Gamma,\psi}$.  We can now use the rule
  (\ref{ins:toforall}) to turn the existential state
  $q_{\psi}^\exists$ into the universal one $q_{\psi}^\forall$ for which
  there are two instructions available in (\ref{ins:conj}), and these
  turn the current
  configuration into the corresponding above mentioned ones.

  If the last rule is the $(\land Ei)$ rule for $i=1,2$ then we know
  that $\psi=\psi_i$ for one of $i=1,2$ and
  $\Gamma\vdash M':\psi_1\land\psi_2$ is derivable through a shorter
  derivation, which means by the induction hypothesis that
  $\aut{A}_\varphi$ eventually accepts from the configuration
  $\<q_{\psi_1\land\psi_2}^\exists,\psi_{1}\land\psi_2,
  w_{\psi_1\land\psi_2},\emptyset,
  S_{\Gamma,\psi_1\land\psi_2},V_{\Gamma,\psi_1\land\psi_2}\>$ where
  actually $w_{\psi_1\land\psi_2}|_{\fv(\psi_i)}\subseteq w_{\psi_i}$
  and $\fv(\psi_i)\subseteq\dom{w_{\psi_1\land\psi_2}}$ for both
  $i=1,2$. Moreover, $S_{\Gamma,\psi_1\land\psi_2} = S_{\Gamma,\psi}$ and
  $V_{\Gamma,\psi_1\land\psi_2} = V_{\Gamma,\psi}$. This configuration
  can be obtained from the current one using respective instruction
  presented at~(\ref{ins:conjE}).

  If the last rule is the $(\to E)$ rule then we have shorter
  derivations for $\Gamma\vdash M_1: \psi'\to \psi$ and
  $\Gamma\vdash M_2: \psi'$. The induction hypothesis gives that
  $\aut{A}_\varphi$ eventually accepts from the configurations
  \begin{displaymath}
    \begin{array}{l}
      \<q_{\psi'_0\to\psi_0}^\exists,\psi'_0\to\psi_0, w_{\psi'\to\psi},\emptyset,
      S_{\Gamma,\psi'\to\psi},V_{\Gamma,\psi'\to\psi}\>,\\
      \<q_{\psi'_0}^\exists,\psi'_0, w_{\psi'},\emptyset, S_{\Gamma,\psi'},V_{\Gamma,\psi'}\>.
    \end{array}
  \end{displaymath}Note that actually
  $S_{\Gamma,\psi'\to\psi}= S_{\Gamma,\psi}$ and
  $V_{\Gamma,\psi'\to\psi}=V_{\Gamma,\psi}$.  We can now use
  the instruction (\ref{ins:toleft}) to turn the current
  configuration into
  \begin{displaymath}
    \<q_{\psi_0,\to}^\forall,\psi_0, w_{\psi'\to\psi},\emptyset, S_{\Gamma,\psi},V_{\Gamma,\psi}\>,
  \end{displaymath}
  which can be turned into the desired two configurations with the
  instructions (\ref{ins:toEleft}) and (\ref{ins:toEright})
  respectively.

  If the last rule is the $(\forall I)$ rule then
  $\psi = \forall X. \psi_1$ and we have a shorter derivation for
  $\Gamma\vdash M_1:\psi_1$ (where $X$ is a fresh variable by the
  eigenvariable condition), which by the induction hypothesis gives
  that $\aut{A}_\varphi$ eventually accepts from the configuration
  $$
  \<q_{\psi_{10}}^\exists,\psi_{10}, w_{\psi_1},\emptyset,
  S_{\Gamma,\psi_1},V_{\Gamma,\psi_1}\>,
  $$ 
  where $w_{\psi_1}(\psi_{10}) = \psi_1$,
  $S_{\Gamma,\psi_1} = S_{\Gamma,\psi}$ and
  $V_{\Gamma,\psi_1} = V_{\Gamma,\psi}\cup\{X\}$.

  We observe now that the instruction (\ref{ins:toforall}) transforms
  the current configuration to
  $\<q_{\forall X.\psi_{10}}^\forall,\psi, w_\psi,\emptyset,
  S_{\Gamma,\psi},V_{\Gamma,\psi}\>$ and then the $\new$ instruction from
  (\ref{ins:forall}) adds appropriate element to $V_{\Gamma,\psi}$ and
  turns the configuration into the awaited one.

  If the last rule is the $(\exists E)$ rule then we know that
  $\Gamma\vdash M_1:\exists X.\psi_1$ and
  $\Gamma,x:\psi_1\vdash M_2:\psi$ are derivable through shorter
  derivations, which means by the induction hypothesis that
  $\aut{A}_\varphi$ eventually accepts from configurations
  \begin{equation}
    \label{eq:ins-exists}
    \begin{array}{l}
        \<q_{\exists X.\psi_{01}}^\exists,\exists X.\psi_{01},
        w_{\exists X.\psi_1},\emptyset, S_{\Gamma,\exists
        X.\psi_1},V_{\Gamma,\exists X.\psi_1}\>, \\
        \<q_{\psi_{0}}^\exists,\psi_{0},
        w_{\psi},\emptyset, S_{\Gamma',\psi},V_{\Gamma',\psi}\>
    \end{array}
  \end{equation}
  where
  $w_{\exists X.\psi_1}(\exists X.\psi_{01}) = \exists X.\psi_1$,
  $w_\psi(\psi_0)=\psi$ and $\Gamma' = \Gamma, x:\psi_1$, which
  consequently means that
  $S_{\Gamma',\psi} = S_{\Gamma,\psi}\cup\{\<\psi_{01},w'\>\}$ and
  $V_{\Gamma',\psi} = V_{\Gamma,\psi}\cup\{X\}$ where
  $w' = [\exists X.\psi_{01}:=X]\oplus w_{\exists
    X.\psi_1}|_{\fv(\psi_{01})}$.
  Note that $x$ is a fresh proof variable by definition and
  $X$ is a fresh variable by the eigenvariable condition.

  We observe that the current configuration can be transformed to
  $$
  \<q_{\psi_0,\exists}^\forall,\psi_0, w_\psi,w_{\exists X.\psi_1},
  S_{\Gamma,\psi},V_{\Gamma,\psi}\>
  $$
  by the instruction (\ref{ins:existsE}). This in turn is transformed
  to the configurations (\ref{eq:ins-exists}) by instructions
  (\ref{ins:existsEone}) and (\ref{ins:existsEtwo}) respectively.
\qed
\end{proof}

We need a proof in the other direction. To express the statement of
the next lemma we need the notation $\Gamma_S$ for a context
$x_1:w_1(\psi_1),\ldots,x_n:w_n(\psi_n)$ where
$S=\{ \<\psi_1,w_1\>,\ldots,\<\psi_n,w_n\>\}$.

\begin{lemma}
  \label{lemma:configurations-to-judgements}
  If $\aut{A}_{\varphi}$ eventually accepts from the configuration
  $\< q_\psi^\exists, \psi, w, \emptyset, S, V\>$ then there is a proof term $M$
  such that $\Gamma_S\vdash M:w(\psi)$.
\end{lemma}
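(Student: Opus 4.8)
The plan is to prove the converse direction of Lemma~\ref{lemma:judgments-to-configurations} by induction on the (well-founded) tree of IDs that witnesses the eventual acceptance from $\< q_\psi^\exists, \psi, w, \emptyset, S, V\>$. The acceptance tree is built according to the inductive definition of ``eventually accepts'': at a universal state every available instruction spawns a child, while at an existential state a single instruction is chosen. Since the starting state $q_\psi^\exists$ is existential, the automaton must first fire exactly one of the non-structural instructions (\ref{ins:toforall}), (\ref{ins:conjE}), (\ref{ins:orleft}), (\ref{ins:toleft}), (\ref{ins:forallE}), (\ref{ins:existsE}), (\ref{ins:bot}), or (\ref{ins:var}). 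My induction hypothesis is that for every ID of the form $\<q_{\psi'}^\exists,\psi',w',\emptyset,S',V'\>$ strictly below the current one in the acceptance tree, there is a term $M'$ with $\Gamma_{S'}\vdash M':w'(\psi')$. I would then perform a case analysis on which instruction is fired at the root, matching each automaton transition with the logical rule it encodes and reading off the proof term from the introduction/elimination rule in Fig.~\ref{fig:rules}.

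First I would dispatch the easy existential-to-universal cases. Firing (\ref{ins:toforall}) moves to $q_\psi^\forall$, whose structural instructions in Fig.~\ref{fig:structural-instructions} are universal; here I reconstruct the introduction term directly. For instance, if $\psi=\psi_1\to\psi_2$ then the store instruction (\ref{ins:to}) yields a child from which acceptance gives, by IH, a term $N$ with $\Gamma_{S\cup\{\<\psi_1,\ldots\>\}}\vdash N:w(\psi_2)$; abstracting gives $\lambda x{:}w(\psi_1).N$ of type $w(\psi_1\to\psi_2)=w(\psi)$. The $(\land I)$, $(\lor I)$, $(\forall I)$ cases are analogous: for $\forall$ the $\new$ instruction (\ref{ins:forall}) introduces a fresh eigenvariable $Y$ into $V$, and the freshness of $Y$ (since $Y\notin V=\FVf{\Gamma_S,\psi}$) is exactly what validates the eigenvariable condition $X\notin FV(\Gamma,\psi)$ on the $(\forall I)$ rule. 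The elimination cases are where the store $S$ does the work. The axiom case (\ref{ins:var}) fires $\check\varphi,\varphi$, which is applicable precisely when some $\<\psi,v\>\in S$ has $v(\psi)=w(\psi)$; that pair corresponds to a variable $x{:}w(\psi)$ in $\Gamma_S$, so $M=x$. For $(\land E)$, $(\to E)$, $(\lor E)$, $(\exists E)$ and $(\bot E)$ I would trace the two-step pattern (a $\jmp$ or $\load$ to a marked universal state $q_{\psi,\bullet}^\forall$, followed by the matching non-structural instructions) and assemble the eliminator term, using that the $\store$/$\instl$ side conditions record exactly the new context entries that the IH then supplies as subterms.

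The step I expect to be the main obstacle is the existential-elimination case driven by (\ref{ins:existsE}), (\ref{ins:existsEone}) and (\ref{ins:existsEtwo}), because it is the one instruction cluster that genuinely manipulates both auxiliary registers and the fresh-variable machinery at once. Here the $\load$ places an interpretation in $w'$, instruction (\ref{ins:existsEtwo}) is an $\instl$ whose side condition sets $w''=([\exists X.\psi{:=}X]\oplus w')\oplus w$ and augments the store with $\<\psi,w''|_{\fv(\psi)}\>$, and I must verify that the substitution bookkeeping makes the two resulting existential children coincide with the premises $\Gamma\vdash M_1:\exists X.\psi_1$ and $\Gamma,x:\psi_1\vdash M_2:\psi$ of the $(\exists E)$ rule, with $X$ genuinely fresh so the eigenvariable condition holds. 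Concretely I need to check that $w''(\psi)$ equals the witness formula obtained by stripping the quantifier and that $\Gamma_{S'}=\Gamma_S,x:w''(\psi)$, so that applying $(\exists E)$ to the two IH-supplied terms yields a term of type $w(\psi)$. The $(\lor E)$ case via (\ref{ins:orleft})/(\ref{ins:orEor})/(\ref{ins:orEone})/(\ref{ins:orEtwo}) presents the same register-threading difficulty but without the fresh-variable subtlety, so once the $\exists$ case is settled the disjunction case follows the same template. I would invoke Proposition~\ref{prop:basic} wherever an instruction relaxes a binding (e.g.\ the $w\subseteq\hat w$ in (\ref{ins:toleft})) to transfer acceptance along the enlarged interpretation, and close the induction by collecting the assembled term $M$ as the required witness.
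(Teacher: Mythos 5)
Your proposal follows essentially the same route as the paper's proof: induction on the inductive definition of eventual acceptance, with a case analysis on which instruction fires from the existential state $q_\psi^\exists$, matching each instruction (or two-step instruction cluster through the marked universal states) to the corresponding introduction or elimination rule and assembling the proof term from the subterms supplied by the induction hypothesis. The one slip is that the $\lor$-introduction instruction (\ref{ins:disI}) is itself available directly at the existential state $q_{\psi_1\lor\psi_2}^\exists$ rather than being reached through (\ref{ins:toforall}), so it must be listed as a separate case in your enumeration of instructions fireable at $q_\psi^\exists$; with that bookkeeping correction the argument coincides with the paper's.
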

\begin{proof}
  The proof is by induction over the definition of the eventually
  accepting configuration by cases depending on the currently
  available instructions.  Note that only instructions
  (\ref{ins:disI}), (\ref{ins:toforall}), (\ref{ins:conjE}),
  (\ref{ins:orleft}), (\ref{ins:toleft}), (\ref{ins:forallE}),
  (\ref{ins:existsE}), (\ref{ins:bot}), (\ref{ins:var}) are available
  for states of the form $q_\phi^\exists$.

  We can immediately see that if one of the instructions
  (\ref{ins:disI}) from Fig.~\ref{fig:structural-instructions} is
  used then the induction hypothesis applied to resulting
  configurations brings the assumption of the respective rule
  $(\lor Ii)$ for $i=1,2$ and we can apply it to obtain the
  conclusion.

  Then taking the instruction (\ref{ins:toforall}) moves control to
  one of the instructions present in
  Fig.~\ref{fig:structural-instructions} and these move control to
  configurations from which the induction hypothesis gives the
  assumptions of the introduction rules
  $(\to I), (\land I),$ $(\forall I), (\exists I)$ respectively.

  Next taking the instructions (\ref{ins:orleft}), (\ref{ins:toleft}),
  (\ref{ins:existsE}), (\ref{ins:bot}) move control to further
  non-structural rules in Fig.~\ref{fig:instructions} and these move
  control to configurations from which the induction hypothesis gives
  the assumptions of the elimination rules $(\lor E)$, $(\to E)$,
  $(\exists E)$, and $(\bot E)$. At the same time the instructions
  (\ref{ins:conjE}), (\ref{ins:forallE}), move control directly to
  configurations from which the induction hypothesis gives the
  assumptions of the elimination rules $(\land E)$, $(\forall E)$.

  At last the instruction (\ref{ins:var}) directly represents the use
  of the $(var)$ rule.

  More details of the reasoning can be observed by referring to
  relevant parts in the proof of
  Lemma~\ref{lemma:judgments-to-configurations} and adapting them to
  the current situation.\qed
\end{proof}

\begin{theorem}[Main theorem]
  The provability in intuitionistic first-order logic is equivalent to
  the emptiness problem for Arcadian automata.
\end{theorem}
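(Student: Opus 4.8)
The plan is straightforward: this "Main theorem" is essentially a corollary that packages together the two lemmas just proved, establishing the equivalence between provability in intuitionistic first-order logic and non-emptiness of the language of runs of the corresponding Arcadian automaton. I would prove the two directions of the equivalence separately, using Lemma~\ref{lemma:judgments-to-configurations} for one direction and Lemma~\ref{lemma:configurations-to-judgements} for the other, both instantiated at the initial configuration of $\aut{A}_\varphi$.

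First I would fix a closed formula $\varphi$ (recalling that the closedness assumption is harmless, since provability of an arbitrary formula reduces to provability of its universal closure) and form the automaton $\aut{A}_\varphi$ with initial ID $\<q_\varphi^\exists, \varphi, \emptyset, \emptyset, \emptyset, \emptyset\>$. The key observation is that this initial ID is exactly the configuration $a_{\Gamma,\psi}$ from the translation with $\Gamma = \emptyset$ and $\psi = \varphi$: here $w_\varphi$ is the binding interpretation associated with $\varphi$ (which, since $\varphi$ is closed, has $\fv(\varphi) = \emptyset$, so $w_\varphi = \emptyset$), the store $S_{\emptyset,\varphi}$ is empty because $\Gamma$ is empty, and $V_{\emptyset,\varphi} = \FVf{\emptyset,\varphi} = \emptyset$. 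Thus $a_{\emptyset,\varphi}$ coincides with the initial ID. Recognising this identification is the small piece of bookkeeping that makes both lemmas apply verbatim.

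For the forward direction, suppose $\varphi$ is provable, i.e.\ $\vdash M:\varphi$ for some proof term $M$. By the Normalisation theorem together with the Long normal forms proposition we may take $M$ in long normal form. The empty context trivially emerged from $\varphi$, and $\varphi$ emerged from itself (take the identity substitution). Hence Lemma~\ref{lemma:judgments-to-configurations} applies and gives that $\aut{A}_\varphi$ eventually accepts from $a_{\emptyset,\varphi}$, which is the initial ID; so the language of runs is non-empty and the emptiness problem answers negatively. Conversely, if $\aut{A}_\varphi$ eventually accepts from its initial ID $\< q_\varphi^\exists, \varphi, \emptyset, \emptyset, \emptyset, \emptyset\>$, then Lemma~\ref{lemma:configurations-to-judgements} yields a proof term $M$ with $\Gamma_\emptyset \vdash M:w(\varphi)$; since $S = \emptyset$ we have $\Gamma_\emptyset = \emptyset$, and since $w = \emptyset$ acts as identity on the closed formula $\varphi$ we have $w(\varphi) = \varphi$, so $\vdash M:\varphi$ and $\varphi$ is provable.

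I do not expect any genuine obstacle here, since the mathematical content has been discharged in the two lemmas; the only care required is in matching the notational conventions, namely verifying that the empty maps $w=w'=\emptyset$ and empty store and domain in the initial ID really do instantiate the hypotheses of both lemmas. The one point worth stating explicitly is how "eventual acceptance from the initial ID" is identified with "non-emptiness of the emptiness problem": as noted in the paragraph preceding Proposition~\ref{prop:basic}, the inductive acceptance definition builds a run tree, and the automaton is eventually accepting from its initial configuration precisely when the language of its runs is non-empty, i.e.\ when the emptiness problem has a negative answer. With this dictionary in place the two implications combine to give the claimed equivalence. \qed
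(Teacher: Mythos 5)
Your proof is correct and takes essentially the same route as the paper, which likewise reduces the theorem to the observation that the emptiness problem amounts to eventual acceptance from the initial configuration and then invokes Lemma~\ref{lemma:judgments-to-configurations} and Lemma~\ref{lemma:configurations-to-judgements}. You merely spell out the bookkeeping (identifying the initial ID with $a_{\emptyset,\varphi}$ and checking the lemmas' hypotheses) that the paper leaves implicit.
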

\begin{proof}
  Let $\varphi$ be a formula of the first-order intuitionistic logic.
  The emptiness problem for $\aut{A}_\varphi$ is equivalent to
  checking if the initial configuration of this Arcadian automaton is
  eventually accepting. This in turn is by
  Lemma~\ref{lemma:judgments-to-configurations} and
  Lemma~\ref{lemma:configurations-to-judgements} equivalent to
  derivability of $\vdash\varphi$.\qed
\end{proof}

\section{Conclusions}
\label{sec:conclusions}

We proposed a notion of automata that can simulate search for proofs
in normal form in the full first-order intuitionistic logic, which can
be viewed by the Curry-Howard isomorphism as a program synthesis for a
simple functional language. This notion enables the possibility to
apply automata theoretic techniques to inhabitant search in this type
system. Although the emptiness problem for such automata is
undecidable (as the logic is, \cite{SchubertUZ2015}), the notion
brings a new perspective to the proof search process which can reveal
new classes of formulae for which the proof search can be made
decidable. In particular this automata, together with earlier
investigations \cite{RPQNisNE,SchubertUZ2015}, bring to the attention
that decidable procedures must constrain the growth of the subset $V$
in ID of automata presented here.

\bibliographystyle{splncs03} 
\bibliography{spojniki}

\end{document}